\documentclass[a4paper,11pt,onecolumn,nopdfoutputerror]{quantumarticle}
\usepackage[utf8]{inputenc}
\usepackage[english]{babel}
\usepackage[T1]{fontenc}
\usepackage{amsmath}
\usepackage{amsthm}
\usepackage{amsfonts}
\usepackage{amssymb}
\usepackage{hyperref}
\usepackage{braket}
\usepackage{booktabs}
\usepackage{array}
\usepackage[numbers,sort&compress]{natbib}
\usepackage{comment}

\newtheorem{theorem}{Theorem}
\newtheorem{lemma}[theorem]{Lemma}
\newtheorem{proposition}[theorem]{Proposition}
\newtheorem{definition}[theorem]{Definition}

\theoremstyle{definition}

\newtheorem{example}[theorem]{Example}

\newcommand{\F}{\mathbb F}
\newcommand{\Zn}{\mathbb Z_n}
\newcommand{\wt}{\operatorname{wt}}

\begin{document}

\title{Cyclic triorthogonal codes in prime dimension}

\author{Shiroman Prakash}
\affiliation{Department of Physics and Computer Science, Dayalbagh Educational Institute, Agra-282005, India}
\email{sprakash@dei.ac.in}

\maketitle

\begin{abstract}
We study cyclic triorthogonal codes in prime dimension $p$, on which a transversal
third-level diagonal gate acts as a logical non-Clifford gate. For odd $p$, we use a
definition of triorthogonality that does not require self-orthogonality. For cyclic codes
of length coprime to $p$, triorthogonality reduces to a sumset condition on the spectral
support, which makes exhaustive search feasible: for two natural constructions, we enumerate all maximal codes of this kind for $p=3$ up to length $121$ and $p=5,7$ up to length $96$. For qubits, we prove that a cyclic code of odd length is triply even if and only if it is classically triorthogonal, giving a complete description of triply-even cyclic codes, and we determine all such codes up to length $763$. For qudits, we find codes such as $[[19,1,3]]_3$ and $[[7,1,3]]_7$, which improve on the
shortest Reed--Muller and Reed--Solomon codes in both overhead exponent and distillation
threshold. The latter is the first member of an infinite family of
$[[p,1,2\lfloor p/6\rfloor+1]]_p$ codes.
\end{abstract}

\section{Introduction}

Triorthogonal codes~\cite{Bravyi_2012} are CSS codes for which a transversal third-level diagonal
gate implements a non-Clifford logical gate, and they underlie many magic state distillation
protocols~\cite{MSD,Bravyi_2012}. However, small explicit examples are comparatively scarce, particularly
for qudits. Many of the best-known examples are cyclic or extended cyclic. These include
the $[[15,1,3]]_2$ code~\cite{MSD}, the punctured Reed--Muller codes~\cite{CampbellAnwarBrowne,campbell2014enhanced,Hastings_2018,Haah_2018,Saha:2025frb}, and the qudit
Reed--Solomon codes~\cite{Krishna_2019}. 
It is therefore natural to study cyclic triorthogonal codes systematically.

Our study includes qudits of all prime dimensions. Magic state distillation extends naturally to qudits of odd prime
dimension~\cite{ACB,CampbellAnwarBrowne, nature}, where it can reduce overheads and raise thresholds
compared with qubits~\cite{campbell2014enhanced,DawkinsHoward,Gupta2020, Prakash:2024ghq, Prakash:2025azi}. The relevant
third-level gates and magic states, and the noise thresholds achievable with them, have been
studied in~\cite{HowardVala,jain2020qutrit,PhysRevA.83.032310}.

For cyclic codes, the algebra of triorthogonality becomes transparent. A cyclic code $C$ of
length $n$ over $\F_p$, with $p\nmid n$, is determined by its \emph{spectral support}
$S\subseteq\Zn$: the set of frequencies at which the discrete Fourier transform of a codeword
may be nonzero, or equivalently the exponents $j$ for which $\alpha^j$ is not a root of the
generator polynomial, where $\alpha$ is a primitive $n$-th root of unity in an extension field of $\F_p$. Coordinatewise
products of cyclic codes are again cyclic, and the span of all $t$-fold products of codewords
of $C$ has spectral support equal to the sumset $tS=S+\cdots+S$~\cite{Cascudo2019,CascudoGundersenRuano2020}.
Triorthogonality asks that certain coordinate sums of such products vanish, and a cyclic code
has vanishing coordinate sums exactly when $0$ is not in its support. Triorthogonality
therefore reduces to the condition that $0$ avoids certain sumsets of $S$. Camps-Moreno
et al.~\cite{CampsMoreno2024} used this idea to characterize binary cyclic CSS-T codes,
leaving the $q$-ary case open. We apply it to triorthogonality in every prime dimension.

We consider two constructions (Section~\ref{sec:cyclic}). In construction~A, the stabilizers
form a cyclic code and the logical row is the all-ones vector; this is the analogue of
punctured Reed--Muller codes, and triorthogonality is equivalent to $0\notin S\cup2S\cup3S$.
In construction~B, a cyclic space is punctured at one coordinate, and the
condition for triorthogonality is simply $0\notin3S$. For odd $p$ we use a definition of triorthogonality that,
unlike that of~\cite{Krishna_2019}, imposes no self-orthogonality, and we give a short
self-contained proof that it yields a transversal non-Clifford gate
(Section~\ref{sec:triortho}). T.~Saha has independently obtained this sufficient condition
for $p\ge5$~\cite{SahaNotes}. The weaker definition matters in practice: the $[[19,1,3]]_3$
code of Example~\ref{ex:19} is not self-orthogonal, and requiring self-orthogonality would
raise the shortest cyclic distance-$3$ qutrit code to length $39$.

\paragraph{Qubits.} For qubits the spectral criterion gives a complete description of a class of codes that is
otherwise hard to construct. A binary code is \emph{triply even} if all its weights are
divisible by $8$, and \emph{classically triorthogonal} if $\sum_lx_ly_lz_l\equiv0$ for all
codewords $x,y,z$. Every triply-even code is classically triorthogonal, but not
conversely~\cite{Betsumiya_2012}. We show that the converse holds for cyclic codes of odd
length: such a code is triply even if and only if $0\notin3S$, that is, if and only if it is
classically triorthogonal (Theorem~\ref{thm:te}). The proof is a short consequence of
McEliece's theorem on the divisibility of weights of cyclic codes~\cite{MCELIECE1972}. Systematic constructions of triply-even codes
are rare: beyond Reed--Muller codes, they have mostly come from doubling~\cite{Betsumiya_2012,Jain_2025}
and triangular graphs~\cite{Betsumiya_2012}. Here every union of cyclotomic cosets with
$0\notin3S$ gives one, and the transversal $T$ gate acts on the resulting quantum codes
exactly, with no Clifford correction. We determine all triply-even cyclic codes of odd length
up to $763$.

\paragraph{Qudits.} The spectral criteria make an exhaustive search over cyclic codes
arising from constructions A and B feasible, since each candidate is a union of cyclotomic cosets and the conditions can be
checked directly on $S$. We carry out this search for $p=3$ up to length $121$ and $p=5,7$
up to length $96$, and certify every distance, either exactly or as a rigorous lower bound
(Sections~\ref{sec:search} and~\ref{sec:results}). The resulting codes reach a given
distance at a fraction of the length of punctured Reed--Muller codes; for example, the new
$[[7,1,3]]_7$ code is less than a sixth of the length of $[[48,1,4]]_7$, the shortest
punctured Reed--Muller code with $d\ge3$. The criteria also yield infinite families. Alongside the qudit Reed--Solomon codes, we obtain
$[[p,1,2\lfloor p/6\rfloor+1]]_p$ codes for every prime $p\ge7$, of which $[[7,1,3]]_7$ is
the first (Proposition~\ref{prop:mds}); as far as we know, this family has not previously
been identified as triorthogonal. Finally, in Section~\ref{sec:perf} we compute exact
distillation thresholds and overhead exponents. The $[[19,1,3]]_3$ and $[[7,1,3]]_7$ codes improve on the shortest Reed--Muller and
Reed--Solomon codes in both; for $p=5$, however, cyclic codes provide no advantage over recursive use of the $[[4,1,2]]_5$
Reed--Solomon code.

\paragraph{Related work.} Zurel, Jana and de Silva~\cite{Zurel2026} recently showed that quantum
quadratic residue codes, which are also cyclic, unify several high-threshold distillation
protocols, including the $5$-qubit, Steane and $11$-qutrit Golay codes~\cite{2020golay}. Those
protocols distill Clifford-eigenstate magic states (the Bravyi--Kitaev $T$ state~\cite{MSD} for qubits and the
strange state~\cite{Veitch_2014} for qutrits) using codes with a transversal Clifford gate. The codes in this paper are of a
different kind: they distill the states $U\ket+$ produced by a transversal non-Clifford
gate $U$, which are the resource for injecting $U$.  For qubits, generalized triorthogonal codes have
recently been classified up to length $54$~\cite{WillsJainSingh2026}, extending~\cite{Nezami_2022}. The binary cyclic CSS-T characterization of~\cite{CampsMoreno2024} has been extended to
matrix-product codes with cyclic constituents~\cite{JaramilloVelez2026}. The authors of~\cite{CampsMoreno2024} also studied
binary triorthogonal codes within the CSS-T framework~\cite{Camps-Moreno:2024qcn}, and
proposed constructions from cyclic codes and the generalization to the $p$-ary case as
directions for future work. Binary CSS-T codes have also been built from cyclic codes~\cite{Berardini2024}, and triorthogonal
codes with cyclic column symmetry were found by integer programming in~\cite{Baldelli2026}, all of
distance~$3$ in the range searched. The shortest known triorthogonal codes at
distances $7$ and $9$, $[[95,1,7]]_2$ and $[[185,1,9]]_2$, come from non-cyclic
constructions~\cite{Sullivan_2024, Jain_2025}. Asymptotically good codes with transversal non-Clifford gates have also been
constructed~\cite{wills2024constantoverheadmagicstatedistillation,golowich2024asymptoticallygoodquantumcodes,nguyen2024goodbinaryquantumcodes,SanJose2026};
these target asymptotic overhead and typically require large lengths or field sizes, whereas we
focus on small codes for small primes.

The paper is organized as follows. Section~\ref{sec:prelim} reviews the spectral description of
cyclic codes. Section~\ref{sec:triortho} recalls the CSS code associated with a matrix, gives our
definition of triorthogonality for odd $p$, and proves that it yields a transversal gate,
separately for $p\ge5$ and $p=3$. Section~\ref{sec:cyclic} derives the spectral criteria, and
Section~\ref{sec:qubit} the qubit result. Sections~\ref{sec:search}--\ref{sec:perf} describe the
search, the codes and their distillation performance, and Section~\ref{sec:conclusion} concludes.
The appendices give the spectral description of concatenated codes, the proof of a lemma about distillation thresholds, the best code at every length, and the
spectral supports of selected codes.

\section{Preliminaries: Cyclic codes over $\F_p$ and spectral support}\label{sec:prelim}
\label{sec:spectral}

We collect the standard facts about cyclic codes that we need~\cite{macwilliams77}, in the
spectral language that makes products of codes easy to describe. Throughout this section
$\gcd(n,p)=1$.

Identify each vector $c\in\F_p^n$, with
coordinates $c_0,\dots,c_{n-1}$, with the polynomial $c(x)=\sum_lc_lx^l$ in the ring
$R_n=\F_p[x]/\langle x^n-1\rangle$. A cyclic shift
of the coordinates is multiplication by $x$, so a linear code is cyclic (invariant under cyclic
shifts) if and only if it is an ideal of $R_n$. Every ideal of $R_n$ is principal: a cyclic code
$C$ equals $\langle g(x)\rangle$ for a unique monic divisor $g$ of $x^n-1$, its \emph{generator
polynomial}, and $\dim C=n-\deg g$.

Let $m=\operatorname{ord}_n(p)$ be the least
$m$ with $n\mid p^m-1$, and let $K=\F_{p^m}$. Then $K^\times$, being cyclic of order $p^m-1$, contains
an element $\alpha$ of order exactly $n$, and
\[
x^n-1=\prod_{j\in\Zn}(x-\alpha^j),
\]
with distinct roots because $p\nmid n$. The \emph{$p$-cyclotomic coset} of $s\in\Zn$ is
$C_s=\{s,ps,p^2s,\dots\}\subseteq\Zn$. The Frobenius map $y\mapsto y^p$ permutes the roots
$\alpha^j$, sending $\alpha^j$ to $\alpha^{pj}$, so the polynomial $M_s(x)=\prod_{j\in C_s}(x-\alpha^j)$ is fixed
by Frobenius and has coefficients in $\F_p$; it is the minimal polynomial of $\alpha^s$ over $\F_p$.
Hence $x^n-1=\prod_sM_s(x)$, the product running over one representative of each coset, is the
factorization of $x^n-1$ into irreducibles over $\F_p$. Every generator polynomial is therefore a
product of some of the $M_s$, and cyclic codes of length $n$ correspond to unions of cosets.

For $c\in K^n$ define $\widehat c(j)=c(\alpha^j)=\sum_lc_l\alpha^{jl}$,
$j\in\Zn$. Since $\sum_{j\in\Zn}\alpha^{jl}=n\,\delta_{l,0}$ and $n$ is invertible in $\F_p$, the
transform is inverted by
\[
c_l=n^{-1}\sum_{j\in\Zn}\widehat c(j)\,\alpha^{-jl}.
\]

\begin{definition}
The \emph{spectral support} of a cyclic code $C=\langle g\rangle$ is $S(C)=\{j\in\Zn:g(\alpha^j)\ne0\}$.
\end{definition}

Equivalently, $\Zn\setminus S(C)$ is the set of zeros of $g$. Since $g$ is a product of minimal
polynomials, $S(C)$ is a union of cyclotomic cosets, and $|S(C)|=n-\deg g=\dim C$. A vector $c$ lies
in $C$ if and only if $g$ divides $c(x)$, i.e., if and only if $\widehat c(j)=0$ for every
$j\notin S(C)$.

It is convenient to think of this in terms of frequencies. For $s\in\Zn$, the \emph{pure frequency}
$e^{(s)}\in K^n$ is the vector with entries $e^{(s)}_l=n^{-1}\alpha^{-sl}$; its transform is
$\delta_s$. The inverse transform says that every vector is a superposition of pure frequencies,
$c=\sum_j\widehat c(j)\,e^{(j)}$, and a cyclic code consists of exactly the vectors that use only
frequencies in $S$. To make this literally true we allow coefficients in $K$: let $C_K$ denote the
$K$-linear span of $C$ in $K^n$. Then $C_K$ is the span of the $e^{(s)}$ with $s\in S$, since $C_K$
lies in that span and both have dimension $|S|$. (Working in $C_K$ is harmless: a linear condition,
such as the vanishing of a coordinate sum, holds on all of $C$ if and only if it holds on $C_K$,
because $C$ spans $C_K$.) For $t\ge1$ let
$C^{\langle t\rangle}=\operatorname{span}\{c^{(1)}*\cdots*c^{(t)}:c^{(i)}\in C\}$, which is again a cyclic code,
and $tS=S+\cdots+S$ ($t$ summands).

\begin{lemma}\label{lem:schur}
Let $p\nmid n$ and let $C$ be cyclic with spectral support $S$. Then:
\begin{enumerate}
\item $S(C^{\langle t\rangle})=tS$ for all $t\ge1$;
\item $\sum_lc_l=0$ for all $c\in C$ if and only if $0\notin S$;
\item $S(C^\perp)=\Zn\setminus(-S)$.
\end{enumerate}
\end{lemma}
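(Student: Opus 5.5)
All three parts follow from the pure-frequency description $C_K=\operatorname{span}_K\{e^{(s)}:s\in S\}$, together with the remark that a linear condition holds on $C$ if and only if it holds on $C_K$. The plan is to work entirely in $K^n$ and at the end descend to $\F_p$ via dimension counts.

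For part 1, first compute coordinatewise products of pure frequencies:
\[
(e^{(s)}*e^{(s')})_l=n^{-2}\alpha^{-(s+s')l}=n^{-1}e^{(s+s')}_l .
\]
So a product of pure frequencies is a nonzero multiple of the pure frequency at the sum. Expanding arbitrary codewords $c^{(i)}\in C_K$ multilinearly in the basis $\{e^{(s)}\}_{s\in S}$ then shows that the $K$-span of all $t$-fold products of elements of $C_K$ equals $\operatorname{span}_K\{e^{(u)}:u\in tS\}$. The inclusion "$\supseteq$" holds because each $e^{(s_1)}*\cdots*e^{(s_t)}$ with $s_i\in S$ is itself such a product, and the $e^{(u)}$ for distinct $u$ are linearly independent.

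It remains to relate this to $C^{\langle t\rangle}$, which is defined over $\F_p$. The $K$-span of $C^{\langle t\rangle}$ equals the $K$-span of the $t$-fold products of elements of $C_K$, by multilinearity and because $C$ spans $C_K$. Since $C^{\langle t\rangle}$ is cyclic, its $K$-span is the span of the $e^{(u)}$ for $u\in S(C^{\langle t\rangle})$. Comparing the two descriptions gives $S(C^{\langle t\rangle})=tS$. This descent step is the only point needing care: the $K$-span of a cyclic code is determined by its spectral support, and distinct pure frequencies are linearly independent, so equality of spans forces equality of supports.

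For part 2, note that $\sum_l c_l=c(1)=\widehat c(0)$. If $0\notin S$, then $\widehat c(0)=0$ for every $c\in C$ by the characterization of membership. Conversely, if $0\in S$, then $e^{(0)}\in C_K$ has coordinate sum $1\neq0$, so the linear functional "coordinate sum" does not vanish on $C_K$, and hence it does not vanish on $C$.

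For part 3, compute the bilinear pairing of pure frequencies:
\[
\sum_l e^{(s)}_l e^{(s')}_l=n^{-2}\sum_l\alpha^{-(s+s')l}=n^{-1}\delta_{s+s',0}.
\]
Thus $e^{(s')}$ is orthogonal to all of $C_K$ exactly when $-s'\notin S$. Hence $D=\operatorname{span}_K\{e^{(s')}:s'\notin -S\}$ is contained in $(C_K)^\perp=(C^\perp)_K$. Both spaces have dimension $n-|S|$, so they are equal. Applying the same descent as in part 1, $S(C^\perp)=\Zn\setminus(-S)$. Here $C^\perp$ is cyclic, since the dual of an ideal is invariant under shifts, and orthogonality over $\F_p$ agrees with orthogonality of the $K$-spans.
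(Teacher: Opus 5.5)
Your proposal is correct and follows essentially the same route as the paper. Pure frequencies multiply by adding their frequencies, have coordinate sum $\delta_{s,0}$, and pair as $n^{-1}\delta_{s+s',0}$; you then pass between $C_K$ and $C$ through spans, and the only cosmetic difference is in part 3, where you close with a dimension count instead of expanding $d=\sum_k\widehat d(k)e^{(k)}$ to see directly that $d\perp C_K$ if and only if $\widehat d$ vanishes on $-S$.
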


\begin{proof}
All three parts follow from simple properties of pure frequencies.

(1) This is due to Cascudo~\cite{Cascudo2019}; see also~\cite{CascudoGundersenRuano2020,CampsMoreno2024}.
Multiplying pure frequencies adds their frequencies, since the exponents add:
$e^{(s_1)}*\cdots*e^{(s_t)}=n^{-(t-1)}e^{(s_1+\cdots+s_t)}$. As $C_K$ is spanned by the $e^{(s)}$,
$s\in S$, the products of $t$ of its elements span exactly the $e^{(k)}$ with $k\in tS$. These
products also span $(C^{\langle t\rangle})_K$, so $S(C^{\langle t\rangle})=tS$.

(2) The coordinate sum of a pure frequency is $\sum_le^{(s)}_l=n^{-1}\sum_l\alpha^{-sl}=\delta_{s,0}$.
So every vector of $C_K$, and hence of $C$, has coordinate sum $0$ if and only if $0\notin S$.

(3) Similarly, $\sum_le^{(s)}_le^{(s')}_l=n^{-1}\delta_{s+s',0}$: a pure frequency is orthogonal to every
other pure frequency except its negative. Writing $d=\sum_k\widehat d(k)e^{(k)}$, we get
$\sum_ld_le^{(s)}_l=n^{-1}\widehat d(-s)$, so $d$ is orthogonal to $C_K$ if and only if $\widehat d$
vanishes on $-S$. Hence $S(C^\perp)=\Zn\setminus(-S)$.
\end{proof}

For $p=2$ we have $S\subseteq2S\subseteq3S$: a coset is closed under doubling, so $s/2\in S$ and $s=s/2+s/2\in2S$, and $a+b=a+b/2+b/2\in3S$. For $p=3$ we have $S\subseteq3S$ but not $2S\subseteq3S$; for $p\ge5$ none of
these inclusions holds in general.

\begin{example}\label{ex:cyc8}
Consider $p=3$ and $n=8$, so that $m=2$ and $K=\F_9$. The cyclotomic cosets modulo $8$ are $\{0\}$, $\{1,3\}$,
$\{2,6\}$, $\{4\}$ and $\{5,7\}$. The code with $S=\{1,3\}$ has dimension $2$, and
\[
2S=\{2,4,6\},\qquad 3S=\{1,3,5,7\},
\]
since $1+1+1=3$, $1+1+3=5$, $1+3+3=7$ and $3+3+3=9\equiv1$. Neither contains $0$, so by
Proposition~\ref{prop:A} below this code, together with the all-ones logical row, is triorthogonal.
It gives an $[[8,1,2]]_3$ code. This is the punctured qutrit Reed--Muller code
of~\cite{CampbellAnwarBrowne}: punctured Reed--Muller codes are construction-A codes, and the search
finds a single maximal support at this length.
\end{example}

\section{Triorthogonal codes in odd prime dimension}\label{sec:triortho}

This section sets up the codes studied in the rest of the paper. We first recall the CSS code
associated with a matrix, for any prime $p$. We then define triorthogonality for odd $p$, and
show in Sections~\ref{sec:p5} and~\ref{sec:p3} that it guarantees a transversal non-Clifford gate.
The two cases are treated separately because for $p=3$ the relevant phases are ninth roots of
unity, which requires some care. For qubits we use the standard definition of Bravyi and
Haah~\cite{Bravyi_2012}, recalled in Section~\ref{sec:qubit}.

\subsection{The CSS code of a matrix}\label{sec:css}

Let $p$ be prime and let $H\in\F_p^{m\times n}$ have linearly independent rows $h^1,\dots,h^m$. We
call the first $k$ rows the \emph{logical rows}, forming $H_1$, and the remaining rows the
\emph{stabilizer rows}, forming $H_0$. Let $\mathcal A=\operatorname{rowspan}(H_0)$ and
$\mathcal H=\operatorname{rowspan}(H)$, and for $x\in\F_p^k$ write $x\cdot H_1=\sum_{i\le k}x_ih^i$.

\begin{lemma}\label{lem:css}
With no orthogonality assumption on $H$, the CSS code $Q(H)$ with $X$-stabilizers
$\{X(a):a\in\mathcal A\}$ and $Z$-stabilizers $\{Z(v):v\in\mathcal H^\perp\}$ is a valid code encoding
$k$ qudits, with code states $\ket{\bar x}=|\mathcal A|^{-1/2}\sum_{a\in\mathcal A}\ket{x\cdot H_1+a}$.
Logical operators can be taken as $\bar X_i=X(h^i)$ and $\bar Z_i=Z(z^i)$ with $z^i\in\mathcal A^\perp$
and $z^i\cdot h^j=\delta_{ij}$. The distances are
$d_Z=\min\{\wt(v):v\in\mathcal A^\perp\setminus\mathcal H^\perp\}$ and
$d_X=\min\{\wt(v):v\in\mathcal H\setminus\mathcal A\}$.
\end{lemma}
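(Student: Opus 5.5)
The plan is to check the standard CSS conditions directly, using only linear algebra over $\F_p$ and the commutation relation $Z(v)X(a)=\omega^{v\cdot a}X(a)Z(v)$ with $\omega=e^{2\pi i/p}$. For validity, the $X$-stabilizers $X(a)$, $a\in\mathcal A$, commute among themselves, and so do the $Z(v)$. Since $\mathcal A\subseteq\mathcal H$, every $v\in\mathcal H^\perp$ is orthogonal to every $a\in\mathcal A$, so $X(a)$ and $Z(v)$ commute. This is the only orthogonality needed, and it holds automatically because the stabilizer rows are rows of $H$; no condition such as $\mathcal H\subseteq\mathcal H^\perp$ is required. The stabilizer group is abelian and contains no nontrivial scalar, since $X(a)Z(v)\propto I$ forces $a=0$ and $v=0$. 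The code dimension is $p^{\,n-\dim\mathcal A-\dim\mathcal H^\perp}=p^{\,n-(m-k)-(n-m)}=p^k$, using linear independence of the rows.

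Next I will verify the code states. Set $\ket{\bar x}=|\mathcal A|^{-1/2}\sum_{a\in\mathcal A}\ket{x\cdot H_1+a}$. For $a'\in\mathcal A$, $X(a')$ permutes the coset $x\cdot H_1+\mathcal A$, so it fixes $\ket{\bar x}$. For $v\in\mathcal H^\perp$, $Z(v)$ acts on each basis vector by the phase $\omega^{v\cdot(x\cdot H_1+a)}=1$, since $x\cdot H_1+a\in\mathcal H$. The $p^k$ states are orthonormal, because linear independence of the rows makes the cosets $x\cdot H_1+\mathcal A$ distinct and hence disjoint. By the dimension count they therefore span the code space.

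For the logical operators, $X(h^i)$ commutes with $Z(\mathcal H^\perp)$ and with $X(\mathcal A)$, and it sends $\ket{\bar x}$ to $\ket{\overline{x+e_i}}$. Independence of the rows of $H$ lets me choose $z^i\in\mathcal A^\perp$ with $z^i\cdot h^j=\delta_{ij}$: the functionals $z\mapsto z\cdot h^j$ on $\mathcal A^\perp$ are independent modulo $\mathcal H^\perp$, because $\mathcal A^\perp/\mathcal H^\perp$ has dimension $k$ and pairs perfectly with $\mathcal H/\mathcal A$. Then $Z(z^i)$ commutes with the stabilizers and multiplies $\ket{\bar x}$ by $\omega^{x_i}$.

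For the distances, the normalizer of the stabilizer group among $Z$-type operators is $Z(\mathcal A^\perp)$, and the stabilizers among these are $Z(\mathcal H^\perp)$. This gives $d_Z$. Symmetrically, an $X(u)$ commutes with $Z(\mathcal H^\perp)$ if and only if $u\in(\mathcal H^\perp)^\perp=\mathcal H$, and it is a stabilizer if and only if $u\in\mathcal A$, which gives $d_X$. Because the code is CSS, a general Pauli $X(u)Z(v)$ is a nontrivial logical only if one of its parts is, so the overall distance is $\min(d_X,d_Z)$.

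The only step needing care is the nondegenerate pairing between $\mathcal A^\perp/\mathcal H^\perp$ and $\mathcal H/\mathcal A$, which I will prove by a dimension count. Everything else is routine.
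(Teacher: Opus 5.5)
Your proposal is correct and follows the same route as the paper: commutation from $\mathcal A\subseteq\mathcal H$, a dimension count giving $p^k$, and the normalizer descriptions ($Z(\mathcal A^\perp)$ modulo $Z(\mathcal H^\perp)$, and $X(\mathcal H)$ modulo $X(\mathcal A)$) for the two distances. The differences are minor. The paper counts the code space directly as $|\mathcal H/\mathcal A|$, since states fixed by $Z(\mathcal H^\perp)$ are supported on $\mathcal H$ and states fixed by $X(\mathcal A)$ are uniform on $\mathcal A$-cosets; it also gets $z^i$ by solving $Hz=e_i$ using the full row rank of $H$, which makes your pairing argument unnecessary.
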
\begin{proof}
$X(a)$ and $Z(v)$ commute because $a\cdot v=0$ for $a\in\mathcal A\subseteq\mathcal H$ and
$v\in\mathcal H^\perp$. States fixed by $Z(\mathcal H^\perp)$ are supported on $\mathcal H$,
and states fixed by $X(\mathcal A)$ are uniform on cosets of $\mathcal A$, so the code space
has dimension $|\mathcal H/\mathcal A|=p^k$. The $z^i$ exist because the rows of $H$ are
independent. Finally, $Z(v)$ commutes with the $X$-stabilizers if and only if
$v\in\mathcal A^\perp$, and such a $Z(v)$ acts trivially on the code space if and only if
$v\in\mathcal H^\perp$; similarly, $X(v)$ preserves the code space if and only if
$v\in\mathcal H$, and acts trivially on it if and only if $v\in\mathcal A$. This gives the
formulas for $d_Z$ and $d_X$.
\end{proof}

After twirling, the noise on the magic states considered here consists of $Z$-type errors only,
so the distance relevant for distillation is $d_Z$~\cite{Bravyi_2012,ACB,CampbellAnwarBrowne,Krishna_2019}.
Throughout, $[[n,k,d]]_p$ refers to $d=d_Z$; by Lemma~\ref{lem:dxdz} below, for triorthogonal codes
this is also the distance of the code.

\subsection{Triorthogonal matrices}\label{sec:tridef}

Let $p$ be odd. Triorthogonality is a condition on the trilinear form
$\sum_lx_ly_lz_l$ evaluated on the rows of a matrix. We allow nonzero weights
$\mathbf c\in(\F_p^\times)^n$ on the coordinates, and define
\[
\tau_{\mathbf c}(x,y,z)=\sum_l\mathbf c_lx_ly_lz_l\in\F_p\qquad(x,y,z\in\F_p^n),
\]
writing $\tau=\tau_{\mathbf 1}$. Non-unit weights are needed only in  Lemmas~\ref{lem:weights} and~\ref{lem:punct}, which we use in
Section~\ref{sec:cyclic} to describe construction~A as a puncturing and to discuss weighted
codes; unless stated otherwise, $\mathbf c=\mathbf 1$.

\begin{definition}\label{def:tri}
Let $H\in\F_p^{m\times n}$ have independent rows $h^1,\dots,h^m$, of which the first $k$
are logical rows and the rest stabilizer rows. The pair $(H,\mathbf c)$ is
\emph{triorthogonal} if
\begin{enumerate}
\item $\tau_{\mathbf c}(h^\alpha,h^\beta,h^\gamma)=0$ for all $\alpha,\beta,\gamma$, not
necessarily distinct, unless $\alpha=\beta=\gamma\le k$; and
\item $\lambda_i:=\tau_{\mathbf c}(h^i,h^i,h^i)\ne0$ for every logical row $i\le k$.
\end{enumerate}
\end{definition}

Condition~1 says that the form vanishes on every triple involving a stabilizer row, and
on every triple of logical rows that are not all equal; only a logical row repeated three
times may give a nonzero value, $\lambda_i$. Unlike the definition of~\cite{Krishna_2019},
no condition is imposed on the pairwise products $\sum_lh^\alpha_lh^\beta_l$; we return
to this point at the end of Section~\ref{sec:p3}.

We also use the corresponding notion for a subspace. A subspace $W\subseteq\F_p^N$ is a
\emph{triorthogonal space}, with weights $\mathbf c\in(\F_p^\times)^N$, if
$\tau_{\mathbf c}(x,y,z)=0$ for all $x,y,z\in W$. We write its length as $N$ because
such spaces are punctured to give codes of length $N-1$ (Lemma~\ref{lem:punct}).

The code associated with $(H,\mathbf c)$ is $Q(H)$ of Lemma~\ref{lem:css}, and its
transversal gate is $\bigotimes_lU^{\mathbf c_l}$, which applies $U^{\mathbf c_l}$ to
physical qudit $l$; here $U$ is the cubic gate defined in Section~\ref{sec:p5} for
$p\ge5$ and in Section~\ref{sec:p3} for $p=3$. We say that the transversal gate
\emph{implements} a gate $G$ on the $k$ logical qudits if it acts on the code states
$\ket{\bar x}$ as $G$ acts on $\ket{x_1,\dots,x_k}$. Sections~\ref{sec:p5}
and~\ref{sec:p3} show that if $(H,\mathbf c)$ is triorthogonal, the transversal gate
implements $\bigotimes_iU^{\lambda_i}$, which applies $U^{\lambda_i}$ to logical qudit $i$
(for $p=3$, up to Pauli corrections).

\begin{lemma}\label{lem:dxdz}
If $(H,\mathbf c)$ is triorthogonal, then $d_Z\le d_X$.
\end{lemma}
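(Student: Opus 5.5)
The plan is to build an injection of $X$-logicals into $Z$-logicals of no larger weight: to each $v\in\mathcal H\setminus\mathcal A$ we attach a vector $w\in\mathcal A^\perp\setminus\mathcal H^\perp$ with $\operatorname{supp}(w)\subseteq\operatorname{supp}(v)$. Then $d_Z\le\wt(w)\le\wt(v)$, and minimizing over $v$ gives $d_Z\le d_X$ by the formulas of Lemma~\ref{lem:css}. The natural candidate is the coordinatewise product $w=\mathbf c*v*v$, that is, $w_l=\mathbf c_lv_l^2$. Since the weights $\mathbf c_l$ are nonzero, its support is exactly $\operatorname{supp}(v)$. Pairing $w$ with a vector $u$ gives $u\cdot w=\tau_{\mathbf c}(u,v,v)$, so every condition on $w$ becomes a condition on the trilinear form.

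First, $w\in\mathcal A^\perp$. Write $v=x\cdot H_1+a$ with $a\in\mathcal A$. For any stabilizer row $h^\gamma$ we have $h^\gamma\cdot w=\tau_{\mathbf c}(h^\gamma,v,v)$. Expanding $v$ in the rows of $H$ by trilinearity, every term contains the stabilizer row $h^\gamma$, so every term vanishes by condition~1 of Definition~\ref{def:tri}. Hence $w\perp\mathcal A$.

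Second, $w\notin\mathcal H^\perp$. Since $v\notin\mathcal A$, some $x_i\neq0$. Compute $h^i\cdot w=\tau_{\mathbf c}(h^i,v,v)$ by expanding $v=\sum_jx_jh^j+a$. Terms containing a stabilizer row vanish, and so do terms with logical indices not all equal. The only surviving term is $x_i^2\,\tau_{\mathbf c}(h^i,h^i,h^i)=x_i^2\lambda_i$, which is nonzero by condition~2 because $p$ is prime and $x_i\ne0$. So $w$ is not orthogonal to $h^i\in\mathcal H$.

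The main point to check is the bookkeeping in this trilinear expansion. One might worry that cross terms such as $\tau(h^i,h^j,h^j)$ with $i\ne j$ survive, but they are explicitly killed by condition~1 ("not all equal"). Note also that the pairwise self-orthogonality conditions are never needed, which is what keeps the lemma valid under the weaker definition. The argument works for every odd $p$ and is independent of the $p=3$ subtleties, since it only uses the trilinear form over $\F_p$.
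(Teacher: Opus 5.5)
Your proof is correct and is essentially the paper's: the paper takes a minimum-weight $u\in\mathcal H\setminus\mathcal A$, sets $v_l=\mathbf c_lu_l^2$, and uses the same trilinear expansion to get $v\in\mathcal A^\perp$ and $v\cdot h^i=\lambda_ix_i^2\ne0$. One small wording point: $v\mapsto\mathbf c*v*v$ is not an injection, since $v$ and $-v$ have the same image, but the argument only needs a weight-preserving map into $\mathcal A^\perp\setminus\mathcal H^\perp$, so nothing is lost.
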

\begin{proof}
Let $u\in\mathcal H\setminus\mathcal A$ have weight $d_X$, and write
$u=\sum_\alpha x_\alpha h^\alpha$. Since $u\notin\mathcal A$, some logical coefficient is
nonzero, say $x_i\ne0$ with $i\le k$. Define $v\in\F_p^n$ by $v_l=\mathbf c_lu_l^2$.
Since $\mathbf c_l\ne0$, we have $v_l\ne0$ exactly when $u_l\ne0$, so $\wt(v)=d_X$. We
show that $v\in\mathcal A^\perp\setminus\mathcal H^\perp$, which gives $d_Z\le d_X$ by
Lemma~\ref{lem:css}.

For any row $h^\gamma$,
\[
v\cdot h^\gamma=\tau_{\mathbf c}(u,u,h^\gamma)
=\sum_{\alpha,\beta}x_\alpha x_\beta\,\tau_{\mathbf c}(h^\alpha,h^\beta,h^\gamma).
\]
If $h^\gamma$ is a stabilizer row, every term vanishes by Definition~\ref{def:tri}, so
$v\in\mathcal A^\perp$. If $\gamma=i$, the only term that can survive is
$\alpha=\beta=i$, so $v\cdot h^i=\lambda_ix_i^2\ne0$ and $v\notin\mathcal H^\perp$.
\end{proof}

The same argument applies to qubits with $v=u$: in a Bravyi--Haah triorthogonal matrix,
distinct rows are orthogonal, so $u\cdot h^\gamma=x_\gamma\wt(h^\gamma)\bmod2$, which
vanishes for stabilizer rows and equals $x_i=1$ for a logical row with $x_i\ne0$.

\subsection{The case $p\ge5$}\label{sec:p5}

Let $p\ge5$ and $\omega=e^{2\pi i/p}$, and define the cubic gate
\[
U=\sum_{j\in\F_p}\omega^{j^3}\ket j\bra j ,
\]
the third-level gate $U_{1,3}$ of~\cite{Cui,Krishna_2019}. Its phases are $p$-th roots of unity and
depend only on $j^3\in\F_p$, so all arithmetic can be done in $\F_p$.

\begin{theorem}\label{thm:conv}
Let $p\ge5$ and let $(H,\mathbf c)$ be triorthogonal. Then $\bigotimes_lU^{\mathbf c_l}$ maps $\ket{\bar x}$ to
$\omega^{\sum_i\lambda_ix_i^3}\ket{\bar x}$; that is, it implements $\bigotimes_iU^{\lambda_i}$ exactly, with no
correction.
\end{theorem}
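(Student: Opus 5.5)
We act directly on the code states of Lemma~\ref{lem:css}. Each basis vector in the superposition defining $\ket{\bar x}$ has the form $\ket{w}$ with $w=x\cdot H_1+a$ and $a\in\mathcal A$. Since $U^{\mathbf c_l}$ multiplies $\ket j$ by $\omega^{\mathbf c_lj^3}$, the transversal gate multiplies $\ket w$ by
\[
\omega^{\sum_l\mathbf c_lw_l^3}=\omega^{\tau_{\mathbf c}(w,w,w)},
\]
and the exponent can be computed in $\F_p$. It therefore suffices to show that $\tau_{\mathbf c}(w,w,w)=\sum_i\lambda_ix_i^3$ for every such $w$, independently of $a$. Once this holds, the phase factors out of the sum over $\mathcal A$ and $\ket{\bar x}\mapsto\omega^{\sum_i\lambda_ix_i^3}\ket{\bar x}$. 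This is precisely the action of $\bigotimes_iU^{\lambda_i}$ on $\ket{x_1,\dots,x_k}$.

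To compute $\tau_{\mathbf c}(w,w,w)$, write $w=\sum_\alpha y_\alpha h^\alpha$, where $y_i=x_i$ for the logical rows $i\le k$ and the remaining coefficients are those of $a$. The form $\tau_{\mathbf c}$ is trilinear, so
\[
\tau_{\mathbf c}(w,w,w)=\sum_{\alpha,\beta,\gamma}y_\alpha y_\beta y_\gamma\,\tau_{\mathbf c}(h^\alpha,h^\beta,h^\gamma).
\]
We use the full expansion over ordered triples, which avoids any binomial multinomial coefficients. By condition~1 of Definition~\ref{def:tri}, every term vanishes except those with $\alpha=\beta=\gamma=i\le k$. Each such term contributes $x_i^3\lambda_i$, so the sum equals $\sum_i\lambda_ix_i^3$, as required.

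No step is a genuine obstacle. The only subtlety is that the phase depends on $\sum_l\mathbf c_lw_l^3$ only modulo $p$, because $\omega$ is a $p$-th root of unity and $U$ is well defined on $j\in\F_p$. This is what allows all the arithmetic to be done in $\F_p$. For $p=3$ the same argument breaks down: there the relevant phases are ninth roots of unity and $j^3$ must be lifted to the integers. The assumption $p\ge5$ is used only in the definition of $U$ with $p$-th root phases, which makes it a genuine non-Clifford diagonal gate. The correctness computation itself holds verbatim.
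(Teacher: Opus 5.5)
Your proof is correct and follows the paper's argument. Both write each basis state as $f=\sum_\alpha x_\alpha h^\alpha$, identify the phase exponent with $\tau_{\mathbf c}(f,f,f)\in\F_p$, expand over ordered triples by trilinearity, and use condition~1 of Definition~\ref{def:tri} to leave exactly $\sum_{i\le k}\lambda_ix_i^3$, independent of the stabilizer coefficients.
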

\begin{proof}
By Lemma~\ref{lem:css}, $\ket{\bar x}$ is a uniform superposition of the basis states $\ket f$ with
\[
f=\sum_{\alpha=1}^mx_\alpha h^\alpha\in\F_p^n,
\]
where $x_1,\dots,x_k$ are the logical values and $x_{k+1},\dots,x_m\in\F_p$ range over all stabilizer
coefficients. The transversal gate multiplies $\ket f$ by $\omega^E$ with $E=\sum_l\mathbf c_lf_l^3\in\F_p$,
and we must show that $E$ depends on $x_1,\dots,x_k$ only. Since $f_l^3=f_lf_lf_l$, we have
$E=\tau_{\mathbf c}(f,f,f)$, and by trilinearity
\[
E=\sum_{\alpha,\beta,\gamma=1}^mx_\alpha x_\beta x_\gamma\,\tau_{\mathbf c}(h^\alpha,h^\beta,h^\gamma),
\]
the sum running over ordered triples. By Definition~\ref{def:tri} every term vanishes except those
with $\alpha=\beta=\gamma\le k$, so $E=\sum_{i\le k}\lambda_ix_i^3$.
\end{proof}

\subsection{The case $p=3$}\label{sec:p3}

For qutrits the map $j\mapsto j^3$ is the identity on $\F_3$, so $\omega^{j^3}$ is a Pauli. The
non-Clifford gate instead has phases that are ninth roots of unity, and the proof must pass between
$\F_3$ and the integers.

An integer $s$ is a \emph{lift} of $a\in\F_p$ if $s\equiv a\pmod p$. The
\emph{standard lift} $\iota:\F_p\to\{0,1,\dots,p-1\}\subset\mathbb Z$ sends $a$ to its unique lift in that
range, and lifts of vectors are taken entrywise. Lifts respect arithmetic only modulo $p$: if $s$ and
$t$ lift $a$ and $b$, then $s+t$ and $st$ lift $a+b$ and $ab$, but in general
$\iota(a)+\iota(b)\ne\iota(a+b)$. For example, in $\F_3$ we have $2+2=1$, yet
$\iota(2)+\iota(2)=4\ne1=\iota(1)$; the two lifts differ by a multiple of $3$ (a ``carry'').

Let $\omega=e^{2\pi i/3}$ and $\zeta=e^{2\pi i/9}$, and define
\[
U=\sum_{j\in\F_3}\zeta^{\iota(j)^3}\ket j\bra j=\operatorname{diag}(1,\zeta,\zeta^{-1}),
\]
the gate of~\cite{HowardVala}; the last form uses $\iota(2)^3=8\equiv-1\pmod9$. The proof rests on the
fact that carries do not affect cubes modulo $9$.

\begin{lemma}\label{lem:cubelift}
If $s$ and $s'$ are integers with $s\equiv s'\pmod3$, then $s^3\equiv s'^3\pmod9$. Consequently
$U\ket j=\zeta^{s^3}\ket j$ for \emph{every} lift $s$ of $j$, not only for $s=\iota(j)$.
\end{lemma}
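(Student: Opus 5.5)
Write $s'=s+3t$ with $t\in\mathbb Z$, which is possible because $s\equiv s'\pmod 3$. Then expand the cube binomially:
\[
s'^3=(s+3t)^3=s^3+9s^2t+27st^2+27t^3 .
\]
Every term after $s^3$ carries a factor of at least $9$. This is the only substantive point: the cross term $3\cdot s^2\cdot 3t$ already has $3\cdot3=9$, and the higher terms have $27$. Reducing modulo $9$ gives $s'^3\equiv s^3\pmod9$.

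For the consequence, take any lift $s$ of $j\in\F_3$. Then $s\equiv\iota(j)\pmod3$, since both are lifts of the same element. The first part, applied with $s'=\iota(j)$, gives $s^3\equiv\iota(j)^3\pmod9$. Because $\zeta$ is a primitive ninth root of unity, $\zeta^{s^3}=\zeta^{\iota(j)^3}$. By the definition of $U$, this means $U\ket j=\zeta^{s^3}\ket j$.

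I do not expect any real obstacle here. The only step needing care is making sure that every term beyond $s^3$ in the binomial expansion is divisible by $9$, rather than merely by $3$.
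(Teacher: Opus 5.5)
Your proposal is correct and follows the paper's proof exactly: the paper also writes $s'=s+3t$ and notes that $s'^3-s^3=9s^2t+27st^2+27t^3$ is divisible by $9$. For the consequence, your reduction to $s'=\iota(j)$ together with $\zeta^9=1$ is the immediate step that the paper leaves implicit.
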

\begin{proof}
Write $s'=s+3t$. Then $s'^3-s^3=9s^2t+27st^2+27t^3$, which is divisible by $9$.
\end{proof}

For instance, $4$ lifts $1\in\F_3$, and indeed $4^3=64\equiv1=\iota(1)^3\pmod9$. We will use two
further facts. Firstly, $s^3\equiv s\pmod3$ for every integer $s$, by Fermat's little theorem, so
$3as^3\equiv3as\pmod9$ for all integers $a$ and $s$. Secondly,
$U^3=\sum_j\zeta^{3\iota(j)^3}\ket j\bra j=\sum_j\omega^{\iota(j)}\ket j\bra j=Z$, so powers of $U$ whose
exponents agree modulo $3$ differ by a Pauli, namely $U^{a+3t}=U^aZ^t$.
In particular, for a weight $\mathbf c_l\in\F_3^\times$ we take $U^{\mathbf c_l}$ to mean $U^{\tilde{\mathbf c}_l}$ for a lift
$\tilde{\mathbf c}_l$ of $\mathbf c_l$; different lifts give the same gate up to a Pauli.

\begin{theorem}\label{thm:conv3}
Let $p=3$ and let $(H,\mathbf c)$ be triorthogonal. Then there is $w\in\F_3^n$ such that
$Z(w)\bigotimes_lU^{\tilde{\mathbf c}_l}$ implements $\bigotimes_iU^{\Lambda_i}$, for integers
$\Lambda_i\equiv\lambda_i\pmod3$. Thus the transversal gate implements $\bigotimes_iU^{\lambda_i}$ up to a fixed
Pauli correction and a logical Pauli.
\end{theorem}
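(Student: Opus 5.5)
The plan is to compute the phase that $\bigotimes_l U^{\tilde{\mathbf c}_l}$ puts on each basis state $\ket f$ in the support of $\ket{\bar x}$, working with integer lifts throughout. We then show that this phase depends on the stabilizer coefficients only through a term linear modulo $3$, which a fixed $Z(w)$ can cancel.

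\textbf{Setting up the lift.} As in the proof of Theorem~\ref{thm:conv}, write $f=\sum_{\alpha=1}^m x_\alpha h^\alpha$. Let $s_\alpha=\iota(x_\alpha)$ and $\eta^\alpha=\iota(h^\alpha)$ be integer lifts, and define the integer vector $F=\sum_\alpha s_\alpha\eta^\alpha$. Then $F_l$ lifts $f_l$. By Lemma~\ref{lem:cubelift}, the transversal gate multiplies $\ket f$ by $\zeta^{E}$ with
\[
E=\sum_l\tilde{\mathbf c}_lF_l^3 \pmod 9 .
\]
This is legitimate even though $F$ is not the standard lift of $f$; this is exactly where the lemma is used. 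I expect this bookkeeping of lifts to be the only delicate point. The key is that everything after this step is an identity of integers reduced modulo $9$.

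\textbf{Expanding the cube.} Expand $F_l^3$ over the multiset of indices. We get
\[
E=\sum_\alpha s_\alpha^3 T_{\alpha\alpha\alpha}+3\sum_{\alpha\ne\beta}s_\alpha^2s_\beta T_{\alpha\alpha\beta}+6\sum_{\alpha<\beta<\gamma}s_\alpha s_\beta s_\gamma T_{\alpha\beta\gamma},
\qquad T_{\alpha\beta\gamma}=\sum_l\tilde{\mathbf c}_l\eta^\alpha_l\eta^\beta_l\eta^\gamma_l\in\mathbb Z .
\]
Since $T_{\alpha\beta\gamma}$ reduces modulo $3$ to $\tau_{\mathbf c}(h^\alpha,h^\beta,h^\gamma)$, condition~1 of Definition~\ref{def:tri} gives $3\mid T$ whenever the indices are not all equal. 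Hence the second and third sums vanish modulo $9$, because their explicit coefficients $3$ and $6$ already supply a factor of $3$.

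In the diagonal sum there are two cases. For a stabilizer row, $T_{\alpha\alpha\alpha}=3t_\alpha$ for some integer $t_\alpha$. Then Fermat gives $3t_\alpha s_\alpha^3\equiv3t_\alpha s_\alpha\pmod9$, which contributes the phase $\omega^{t_\alpha x_\alpha}$; this phase is linear in $x_\alpha$. For a logical row, set $\Lambda_i:=T_{iii}$. This is an integer with $\Lambda_i\equiv\lambda_i\pmod3$, and it contributes $\zeta^{\Lambda_i s_i^3}$, which is exactly the phase of $U^{\Lambda_i}$ on $\ket{x_i}$, again by Lemma~\ref{lem:cubelift}.

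\textbf{The correction.} Because the rows of $H$ are independent, there is $w\in\F_3^n$ with $w\cdot h^\alpha=-t_\alpha$ for every stabilizer row $\alpha>k$ and $w\cdot h^i=0$ for every logical row $i\le k$. The operator $Z(w)$ multiplies $\ket f$ by $\omega^{w\cdot f}=\omega^{-\sum_{\alpha>k}t_\alpha x_\alpha}$, which cancels the stabilizer-dependent phases. The total phase is then $\prod_i\zeta^{\Lambda_i\iota(x_i)^3}$, independent of the stabilizer coefficients. So $Z(w)\bigotimes_lU^{\tilde{\mathbf c}_l}$ acts on $\ket{\bar x}$ as $\bigotimes_iU^{\Lambda_i}$.

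Finally, since $U^3=Z$, we have $U^{\Lambda_i}=U^{\lambda_i}Z^{(\Lambda_i-\lambda_i)/3}$, where $\lambda_i$ is read as an integer lift. This yields the second sentence of the theorem: the transversal gate implements $\bigotimes_iU^{\lambda_i}$ up to the fixed physical Pauli $Z(w)^{-1}$ and a logical Pauli.
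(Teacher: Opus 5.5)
Your proposal is correct and follows the paper's proof essentially step for step. Both use the integer lift $F=\sum_\alpha\iota(x_\alpha)\iota(h^\alpha)$, justified by Lemma~\ref{lem:cubelift}, and expand the cube with multiplicities $3$ and $6$ so that the mixed terms vanish modulo $9$; both then linearize the stabilizer diagonal terms by Fermat, cancel them with $Z(w)$, and use $U^3=Z$ to pass from $\Lambda_i$ to $\lambda_i$ at the cost of a logical Pauli.
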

\begin{proof}
As in the proof of Theorem~\ref{thm:conv}, $\ket{\bar x}$ is a uniform superposition of the states
$\ket f$ with $f=\sum_\alpha x_\alpha h^\alpha\in\F_3^n$, and the transversal gate multiplies $\ket f$ by $\zeta^E$
with $E=\sum_l\tilde{\mathbf c}_l\,\iota(f_l)^3\bmod9$. We compute $E$ in five steps.

\emph{Step 1: an integer lift of $f$.} Define the integer vector
$\tilde f=\sum_\alpha\iota(x_\alpha)\,\iota(h^\alpha)\in\mathbb Z^n$, computed in $\mathbb Z$ with no reduction.
Reducing modulo $3$ gives back $f$, so $\tilde f_l$ is a lift of $f_l$ for every $l$. By
Lemma~\ref{lem:cubelift}, $\iota(f_l)^3\equiv\tilde f_l^{\,3}\pmod9$, hence
$E\equiv\sum_l\tilde{\mathbf c}_l\tilde f_l^{\,3}\pmod9$. Working with $\tilde f$ instead of $\iota(f)$ is the
key step: $\tilde f$ is linear in the integers $\iota(x_\alpha)$, whereas $\iota(f)$ is not, because of
carries.

\emph{Step 2: expansion.} Expanding the cube over ordered triples and grouping them by how many
indices coincide,
\begin{multline*}
E\equiv\sum_\alpha\iota(x_\alpha)^3\,\Lambda_\alpha
+3\sum_{\alpha\ne\beta}\iota(x_\alpha)^2\iota(x_\beta)\,\tilde\tau_{\alpha\alpha\beta}\\
+6\sum_{\alpha<\beta<\gamma}\iota(x_\alpha)\iota(x_\beta)\iota(x_\gamma)\,\tilde\tau_{\alpha\beta\gamma}\pmod9,
\end{multline*}
where the integers
\[
\tilde\tau_{\alpha\beta\gamma}=\sum_l\tilde{\mathbf c}_l\,\iota(h^\alpha_l)\iota(h^\beta_l)\iota(h^\gamma_l),\qquad
\Lambda_\alpha=\tilde\tau_{\alpha\alpha\alpha},
\]
are lifts of $\tau_{\mathbf c}(h^\alpha,h^\beta,h^\gamma)$ and of $\tau_{\mathbf c}(h^\alpha,h^\alpha,h^\alpha)$. The
coefficients $3$ and $6$ count the orderings of a triple with exactly two, or no, equal indices.

\emph{Step 3: the mixed terms vanish.} In the second and third sums every term carries a factor $3$,
so modulo $9$ it depends only on the remaining factors modulo $3$; in particular
$\tilde\tau_{\alpha\alpha\beta}$ and $\tilde\tau_{\alpha\beta\gamma}$ enter only through their residues
$\tau_{\mathbf c}(h^\alpha,h^\alpha,h^\beta)$ and $\tau_{\mathbf c}(h^\alpha,h^\beta,h^\gamma)$. These vanish by
Definition~\ref{def:tri}, since the indices are not all equal. So both sums are $\equiv0\pmod9$.

\emph{Step 4: stabilizer terms are Pauli phases.} For a stabilizer row $\alpha>k$,
$$\Lambda_\alpha\equiv\tau_{\mathbf c}(h^\alpha,h^\alpha,h^\alpha)=0\pmod3,$$ so $\Lambda_\alpha=3s_\alpha\bmod9$ for an
integer $s_\alpha$. By Fermat's little theorem, $3s_\alpha\iota(x_\alpha)^3\equiv3s_\alpha\iota(x_\alpha)\pmod9$, so the stabilizer
terms contribute the phase \begin{equation*}
\zeta^{3\sum_{\alpha>k}s_\alpha\iota(x_\alpha)}=\omega^{\sum_{\alpha>k}s_\alpha x_\alpha},
\end{equation*} which is
\textit{linear} in the stabilizer coefficients. Because the rows of $H$ are independent, there is
$w\in\F_3^n$ with $w\cdot h^\alpha=-s_\alpha$ for $\alpha>k$ and $w\cdot h^i=0$ for $i\le k$. The Pauli operator $Z(w)$
multiplies $\ket f$ by $\omega^{w\cdot f}=\omega^{-\sum_{\alpha>k}s_\alpha x_\alpha}$, cancelling these terms exactly and
leaving the logical part unchanged.

\emph{Step 5: the logical gate.} We are left with $\zeta^{\sum_{i\le k}\Lambda_i\iota(x_i)^3}$ which is the phase of
$\bigotimes_iU^{\Lambda_i}$ on $\ket{x_1,\dots,x_k}$. Finally $\Lambda_i\equiv\lambda_i\pmod3$, so
$U^{\Lambda_i}=U^{\lambda_i}Z^{t_i}$ for some integer $t_i$, a logical Pauli.
\end{proof}

\paragraph{Remark on pairwise conditions.} No pairwise condition $\sum_lh^a_lh^b_l=0$ is used in
Definition~\ref{def:tri} or in the proofs above. The definition of~\cite{Krishna_2019} includes such
conditions partly because, for $p=3$, it uses the gate $V=\operatorname{diag}(1,\zeta,\zeta^2)$. One can
check that $V=UD$ with $D=\operatorname{diag}(1,1,\omega)$, a diagonal Clifford gate. Hence the magic
states $U\ket+$ and $V\ket+$ are Clifford-equivalent. However, $D^{\otimes n}$ contributes a phase
quadratic in the stabilizer coefficients, which a Pauli can cancel only if the stabilizers are
orthogonal to every row. With $U$ these conditions are not needed. A fuller treatment, including
converses of Theorems~\ref{thm:conv} and~\ref{thm:conv3}, will be given elsewhere.

\subsection{Weights and puncturing}

We also note two simple facts
that will be used in Section~\ref{sec:cyclic}.

\begin{lemma}\label{lem:weights}
$W$ admits weights $\mathbf c$ making it a triorthogonal space if and only if
$(W^{\langle3\rangle})^\perp$ contains a vector with no zero entries.
\end{lemma}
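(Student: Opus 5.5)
The proof is a direct unwinding of the definitions. It rests on one observation: the trilinear form with weights can be written as an ordinary dot product against the weight vector.

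For any $x,y,z\in\F_p^N$,
\[
\tau_{\mathbf c}(x,y,z)=\sum_l \mathbf c_l x_ly_lz_l=\mathbf c\cdot(x*y*z),
\]
where $*$ is the coordinatewise product. By definition, $W^{\langle3\rangle}$ is the span of all products $x*y*z$ with $x,y,z\in W$. The condition $\mathbf c\cdot(x*y*z)=0$ is linear in the product vector. So $\tau_{\mathbf c}$ vanishes on all triples from $W$ if and only if $\mathbf c$ is orthogonal to every generator of $W^{\langle3\rangle}$. This holds if and only if $\mathbf c\in(W^{\langle3\rangle})^\perp$.

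Both directions now follow. If weights $\mathbf c\in(\F_p^\times)^N$ make $W$ a triorthogonal space, then $\mathbf c$ is a vector of $(W^{\langle3\rangle})^\perp$ with no zero entries. Conversely, suppose $(W^{\langle3\rangle})^\perp$ contains a vector $\mathbf c$ with all entries nonzero. Then $\mathbf c\in(\F_p^\times)^N$ is an admissible weight vector, and by the identity above $\tau_{\mathbf c}(x,y,z)=0$ for all $x,y,z\in W$.

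There is no real obstacle here. The only point needing care is that orthogonality to the spanning set of products is equivalent to orthogonality to its span. This is immediate from the linearity of the dot product.
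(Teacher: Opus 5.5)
Your proposal is correct and matches the paper's proof, which likewise writes $\tau_{\mathbf c}(x,y,z)=\langle\mathbf c,x*y*z\rangle$ and concludes that the condition is exactly $\mathbf c\perp W^{\langle3\rangle}$. The requirement $\mathbf c\in(\F_p^\times)^N$ then corresponds to that vector having no zero entries.
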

\begin{proof}
$\tau_{\mathbf c}(x,y,z)=\langle\mathbf c,x*y*z\rangle$, so the condition is $\mathbf c\perp W^{\langle3\rangle}$.
\end{proof}

\begin{lemma}[Puncturing]\label{lem:punct}
Let $W\subseteq\F_p^N$ be a triorthogonal space with weights $\mathbf c$, and $s$ a coordinate on which
$W$ is not identically zero. Choose $g\in W$ with $g_s=1$ and a basis of $\{w\in W:w_s=0\}$, and
delete coordinate $s$. The result is a triorthogonal matrix with one logical row and
$\lambda=-\mathbf c_s$, and its distance is
\[
d_Z=\min\{\wt(u)-1:u\in W^\perp,\ u_s\ne0\}.
\]
\end{lemma}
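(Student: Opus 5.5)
We need to prove the puncturing lemma in three parts: that the punctured matrix is triorthogonal, that $\lambda=-\mathbf c_s$, and the distance formula.

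\emph{Setup.} Let $W_0=\{w\in W:w_s=0\}$, which has codimension one in $W$ because $W$ is not identically zero at $s$. Write $\pi$ for deletion of coordinate $s$, and $\mathbf c'$ for the weights with $\mathbf c_s$ removed. The rows of $H$ are $\pi(g)$ (the logical row) followed by $\pi(w^j)$ for a basis $w^j$ of $W_0$ (the stabilizer rows). These rows are independent. Indeed, $\pi$ is injective on $W_0$, and $\pi(g)\in\pi(W_0)$ would give $g-w\in W$ supported only on $s$. Then $e_s\in W$, so $\mathbf c_s=\tau_{\mathbf c}(e_s,e_s,e_s)=0$, contradicting $\mathbf c_s\ne0$. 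This also shows $\pi|_W$ is injective, so $\mathcal H=\pi(W)$ and $\mathcal A=\pi(W_0)$.

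\emph{Triorthogonality.} For any $x,y,z\in W$ the triorthogonality of $W$ gives
\[
0=\tau_{\mathbf c}(x,y,z)=\tau_{\mathbf c'}(\pi x,\pi y,\pi z)+\mathbf c_sx_sy_sz_s .
\]
If any of the three rows is a stabilizer row, the last term vanishes because $w_s=0$, so condition~1 of Definition~\ref{def:tri} holds. With only one logical row, no other triples need checking. For $x=y=z=g$ we get $\lambda=\tau_{\mathbf c'}(\pi g,\pi g,\pi g)=-\mathbf c_s\ne0$, which is condition~2.

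\emph{Distance.} By Lemma~\ref{lem:css}, $d_Z$ is the minimum weight of $v\in\mathcal A^\perp\setminus\mathcal H^\perp$. I will build a bijection between such $v$ and vectors $u\in W^\perp$ with $u_s\ne0$, under which $\wt(u)=\wt(v)+1$.

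\emph{From $v$ to $u$.} The functional $w\mapsto\pi(w)\cdot v$ on $W$ vanishes on $W_0$. Since $W_0$ is the kernel of $w\mapsto w_s$, this functional equals $-a\,w_s$ for some scalar $a$. Here $a\ne0$ exactly when $v\notin\mathcal H^\perp$, since $\pi(g)\cdot v=-a$. Define $u$ by $u_s=a$ and $u_l=v_l$ for $l\ne s$. Then $u\cdot w=\pi(w)\cdot v+a\,w_s=0$ for all $w\in W$, so $u\in W^\perp$ with $u_s\ne0$ and $\wt(u)=\wt(v)+1$.

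\emph{From $u$ to $v$.} Conversely, given $u\in W^\perp$ with $u_s\ne0$, set $v=\pi(u)$. For $w\in W_0$ we have $\pi(w)\cdot v=u\cdot w=0$, so $v\in\mathcal A^\perp$. Also $\pi(g)\cdot v=-u_s\ne0$, so $v\notin\mathcal H^\perp$.

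Taking minima on both sides gives the stated formula for $d_Z$. The only subtle point in the whole argument is the independence of the rows, which is where $\mathbf c_s\ne0$ is used.
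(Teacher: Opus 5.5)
Your proof is correct and takes essentially the same route as the paper. Independence comes from the observation that $e_s\in W$ would force $\mathbf c_s=\tau_{\mathbf c}(e_s,e_s,e_s)=0$; triorthogonality comes from $\tau_{\mathbf c}(x,y,z)=\tau_{\mathbf c'}(\pi x,\pi y,\pi z)+\mathbf c_sx_sy_sz_s$; and the distance comes from the same bijection $v\mapsto u$ that reinserts $u_s=-\pi(g)\cdot v$. The only cosmetic difference is that you get $u_s$ by factoring the functional $w\mapsto\pi(w)\cdot v$ through $w\mapsto w_s$, whereas the paper defines the insertion on all of $\F_p^{N-1}$ and checks the membership equivalences directly.
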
\begin{proof}
Write $W_0=\{w\in W:w_s=0\}$, so that $W=\langle g\rangle\oplus W_0$, and for
$x\in\F_p^N$ let $x'$ be $x$ with coordinate $s$ deleted. The punctured matrix has logical
row $g'$, stabilizer space $\mathcal A=W_0'$, and $\mathcal H=W'$.

\emph{Triorthogonality.} For $x,y,z\in W$, the form on the remaining coordinates is
\[
\tau_{\mathbf c}(x,y,z)-\mathbf c_sx_sy_sz_s=-\mathbf c_sx_sy_sz_s,
\]
since $\tau_{\mathbf c}$ vanishes on $W$. This is zero if any of $x,y,z$ lies in $W_0$, and
equals $-\mathbf c_s$ for $x=y=z=g$. So Definition~\ref{def:tri} holds with
$\lambda=-\mathbf c_s$.

\emph{Independence.} If the rows became dependent after deletion, some nonzero $w\in W$
would have $w'=0$, that is, $w=ae_s$ with $a\ne0$. Then
$\tau_{\mathbf c}(w,w,w)=a^3\mathbf c_s\ne0$, contradicting the triorthogonality of $W$.

\emph{Distance.} Given $v\in\F_p^{N-1}$, let $u$ be $v$ with $t=-v\cdot g'$ inserted in
position $s$. Then $u\cdot g=0$, and $u\cdot w=v\cdot w'$ for $w\in W_0$, so
$v\in\mathcal A^\perp$ if and only if $u\in W^\perp$. Moreover $v\in\mathcal H^\perp$ if and
only if in addition $v\cdot g'=0$, that is, $t=0$. Hence $v\mapsto u$ is a bijection from
$\mathcal A^\perp\setminus\mathcal H^\perp$ onto $\{u\in W^\perp:u_s\ne0\}$, with
$\wt(v)=\wt(u)-1$.
\end{proof}
\section{Cyclic constructions}\label{sec:cyclic}

We now give two ways of building triorthogonal codes from cyclic codes. In both,
triorthogonality becomes a sumset condition on $S$. Throughout, $p\nmid n$, $C$ and $W$ are
cyclic codes with spectral support $S$, and triorthogonality is unweighted unless stated
otherwise.

\subsection{Two constructions}

Both constructions are obtained by puncturing a triorthogonal space
(Lemma~\ref{lem:punct}). In construction~B the space is a cyclic code $W$, and the punctured
quantum code is not cyclic. In construction~A the space is spanned by $(\mathbf 1\,|\,1)$ and
$(C\,|\,0)$, with weight $-n$ on the extra coordinate; puncturing it at that coordinate gives
the cyclic quantum code with stabilizers $C$ and logical row $\mathbf 1$. Construction~A is
thus formally a weighted case of construction~B, but its unpunctured space is extended cyclic
rather than cyclic, so we treat it separately. It is the analogue of punctured Reed--Muller
codes.

\begin{proposition}[Construction A: cyclic stabilizers, all-ones logical]\label{prop:A}
Let $H$ have logical row $\mathbf 1=(1,\dots,1)$ and stabilizer rows a basis of $C$. Then $H$ is
triorthogonal (in the sense of Bravyi--Haah for $p=2$, and of Definition~\ref{def:tri} with
$\mathbf c=\mathbf 1$ for odd $p$) if and only if
\[
0\notin S,\qquad0\notin2S,\qquad0\notin3S .
\]
Here $\lambda=n\bmod p$, which is nonzero since $p\nmid n$. For $p=2$ the condition reduces to $0\notin3S$, and for $p=3$ to
$0\notin2S\cup3S$. The distance is
\[
d=\min\{\wt(v): v\in C^\perp,\ \textstyle\sum_lv_l\ne0\},
\]
where $C^\perp$ is cyclic with spectral support $\Zn\setminus(-S)$.
\end{proposition}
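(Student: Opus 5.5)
The plan is to write every instance of the trilinear form as a coordinate sum of a product of codewords, and then apply Lemma~\ref{lem:schur}. For odd $p$, with $\mathbf c=\mathbf 1$, Definition~\ref{def:tri} requires three things. First, $\tau(a,b,c)=0$ whenever $a,b,c$ all lie in $C$; this is the coordinate sum of $a*b*c$. Second, $\tau(\mathbf 1,a,b)=\sum_l a_lb_l=0$, the coordinate sum of $a*b$. Third, $\tau(\mathbf 1,\mathbf 1,a)=\sum_l a_l=0$. The only remaining triple is $(\mathbf 1,\mathbf 1,\mathbf 1)$, giving $\lambda=\tau(\mathbf 1,\mathbf 1,\mathbf 1)=n\bmod p\ne0$, so condition~2 holds automatically.

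By trilinearity, conditions on basis rows are equivalent to the same conditions on all codewords of $C$. So the three requirements say that every element of $C^{\langle3\rangle}$, $C^{\langle2\rangle}$ and $C$ respectively has zero coordinate sum. By Lemma~\ref{lem:schur}(1), these spans are cyclic with supports $3S$, $2S$ and $S$. Lemma~\ref{lem:schur}(2) then turns each requirement into $0\notin3S$, $0\notin2S$ and $0\notin S$. Both directions come for free, since each step is an equivalence.

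The case $p=2$ uses the Bravyi--Haah conditions. These ask for even overlaps between pairs of distinct rows, for $\sum_l h^a_lh^b_lh^c_l\equiv0$ over triples of distinct rows, and for odd weight of the logical row and even weight of the stabilizer rows. The delicate point is that these are conditions only on distinct rows and on weights, and the step I expect to need the most care is checking that they match the span conditions. Over $\F_2$ we have $x*x=x$, so the weight of a row is $\tau(h,h,h)$, and overlaps of distinct rows also come from $\tau$. Moreover, since $x_l^2=x_l$, the function $x\mapsto\wt(x)\bmod 2$ is linear. So the even-weight condition on stabilizer rows is equivalent to all of $C$ having even weight. Pairwise products $a*b$ with $a\ne b$ in the basis, together with the squares $a*a=a$, span $C^{\langle2\rangle}$, and similarly for triples. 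The Bravyi--Haah conditions therefore say that $C^{\langle t\rangle}$ has zero coordinate sums for $t=1,2,3$, together with the condition that $\mathbf 1$ has odd weight, which holds since $n$ is odd. Hence we again get $0\notin S\cup2S\cup3S$.

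The reductions then follow from the inclusions noted after Lemma~\ref{lem:schur}. For $p=2$ we have $S\subseteq2S\subseteq3S$, and for $p=3$ we have $S\subseteq3S$.

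For the distance, I apply Lemma~\ref{lem:css} with $\mathcal A=C$ and $\mathcal H=C+\langle\mathbf 1\rangle$. Then $d_Z$ is the minimum weight over $v\in C^\perp$ with $v\notin\mathcal H^\perp$. Since $v\perp C$ already, this is the condition $v\cdot\mathbf 1=\sum_lv_l\ne0$. The support of $C^\perp$ is given by Lemma~\ref{lem:schur}(3).

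Finally, I need the rows of $H$ to be independent, that is, $\mathbf 1\notin C$. This holds because $0\notin S$ while $\mathbf 1$ has spectral support $\{0\}$. Alternatively, $\mathbf 1$ has coordinate sum $n\ne0$, whereas every element of $C$ has coordinate sum zero.
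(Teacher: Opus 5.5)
Your proof is correct and takes the same route as the paper. You write each instance of $\tau$ involving a stabilizer row as a coordinate sum over $C$, $C^{\langle2\rangle}$ or $C^{\langle3\rangle}$ and apply Lemma~\ref{lem:schur}; you use $x*x=x$ to match the Bravyi--Haah conditions for $p=2$; and you read off the distance from $\mathcal A^\perp=C^\perp$ and $\mathcal H^\perp=C^\perp\cap\mathbf 1^\perp$. Your extra check that $\mathbf 1\notin C$, so that the rows of $H$ are independent, is sound and makes explicit a point the paper leaves implicit.
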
\begin{proof}
By trilinearity, the conditions of Definition~\ref{def:tri} on triples involving a
stabilizer row hold for the basis of $C$ if and only if they hold for all $x,y,z\in C$.
With logical row $\mathbf 1$, these triples give
\[
\tau(\mathbf1,\mathbf1,x)=\textstyle\sum_lx_l,\qquad
\tau(\mathbf1,x,y)=\sum_lx_ly_l,\qquad
\tau(x,y,z)=\sum_lx_ly_lz_l,
\]
so triorthogonality says exactly that $C$, $C^{\langle2\rangle}$ and $C^{\langle3\rangle}$
have zero coordinate sums, which by Lemma~\ref{lem:schur} is
$0\notin S\cup2S\cup3S$. The remaining condition is
$\lambda=\tau(\mathbf1,\mathbf1,\mathbf1)=n\ne0$. For $p=2$, the Bravyi--Haah conditions
say that the stabilizer rows have even weight and that every product of two or three
distinct rows has even weight; since $x*x=x$ over $\F_2$, this is again the statement that
$C$, $C^{\langle2\rangle}$ and $C^{\langle3\rangle}$ have even weights. For the distance,
$\mathcal A^\perp=C^\perp$ and $\mathcal H^\perp=C^\perp\cap\mathbf1^\perp$, so
$\mathcal A^\perp\setminus\mathcal H^\perp$ consists of the vectors of $C^\perp$ with
nonzero coordinate sum.
\end{proof}
\begin{proposition}[Construction B: punctured cyclic spaces]\label{prop:B}
$W$ is an (unweighted) triorthogonal space if and only if $0\notin3S$. Puncturing a nonzero such $W$
at one coordinate gives an $[[n-1,1,d]]_p$ triorthogonal code with $\lambda=-1$ and
\[
d=d(W^\perp)-1 .
\]
\end{proposition}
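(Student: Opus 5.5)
The argument has two independent parts: the triorthogonality criterion, which comes straight from Lemma~\ref{lem:schur}, and the code parameters, which come from Lemma~\ref{lem:punct} together with cyclicity.

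\emph{Triorthogonality.} With $\mathbf c=\mathbf 1$ we have $\tau(x,y,z)=\sum_l(x*y*z)_l$. So $W$ is a triorthogonal space if and only if every vector in the span of triple products, namely $W^{\langle3\rangle}$, has coordinate sum zero. By Lemma~\ref{lem:schur}(1), $W^{\langle3\rangle}$ is cyclic with spectral support $3S$. By Lemma~\ref{lem:schur}(2), all its coordinate sums vanish if and only if $0\notin3S$. Passing from the spanning products to the whole span is immediate by linearity. This part is just the $t=3$ case of the sumset argument already used in Proposition~\ref{prop:A}.

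\emph{Puncturing.} Take $W\neq0$ with $0\notin3S$. Because $W$ is cyclic and nonzero, it is not identically zero on any coordinate: cyclic shifts move any nonzero entry to every position. So Lemma~\ref{lem:punct} applies at any coordinate $s$, say $s=0$, with $\mathbf c_s=1$. It gives a triorthogonal matrix with one logical row and $\lambda=-1$. The length is $n-1$ and there is a single logical qudit, so the result is an $[[n-1,1,d]]_p$ code. Lemma~\ref{lem:punct} also gives the distance as
\[
d=\min\{\wt(u)-1:u\in W^\perp,\ u_0\ne0\}.
\]

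\emph{Distance.} It remains to show this minimum equals $d(W^\perp)-1$, i.e.\ that the constraint $u_0\neq0$ can be dropped. $W^\perp$ is cyclic, with support $\Zn\setminus(-S)$ by Lemma~\ref{lem:schur}(3). Take any minimum-weight nonzero $u\in W^\perp$. It has some nonzero coordinate, and a suitable cyclic shift of $u$ moves that coordinate to position $0$. The shifted vector is still in $W^\perp$ and has the same weight. Hence the constrained minimum equals the unconstrained minimum distance $d(W^\perp)$.

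One degenerate case needs care: $W^\perp=0$, which happens if $W=\F_p^n$. But then $S=\Zn$ and $0=0+0+0\in3S$, so this case is excluded by the hypothesis. Checking this is the only point needing care; the rest is direct assembly of earlier lemmas.
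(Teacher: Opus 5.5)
Your proof is correct and follows the paper's argument step for step. You use Lemma~\ref{lem:schur} for the criterion $0\notin3S$, apply Lemma~\ref{lem:punct} at an arbitrary coordinate (valid because a nonzero cyclic code vanishes on no coordinate) to get $\lambda=-\mathbf c_s=-1$, and shift a minimum-weight vector of $W^\perp$ cyclically to drop the constraint $u_s\neq0$; your extra check that $W^\perp\neq0$ (since $W=\F_p^n$ would force $0\in3S$) is a small point the paper leaves implicit.
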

\begin{proof}
Since $\tau(x,y,z)=\langle\mathbf1,x*y*z\rangle$, $\tau$ vanishes on $W$ if and only if
$W^{\langle3\rangle}$ has zero coordinate sums, that is, $0\notin S(W^{\langle3\rangle})=3S$,
by Lemma~\ref{lem:schur}. A nonzero cyclic code is nonzero at every coordinate, so
Lemma~\ref{lem:punct} applies at any coordinate $s$, with $\mathbf c_s=1$ and hence
$\lambda=-1$. Since $W^\perp$ is also cyclic, any minimum-weight vector of $W^\perp$ can be
shifted so that $u_s\ne0$, so the distance given by Lemma~\ref{lem:punct} is
$d(W^\perp)-1$.
\end{proof}

The two criteria differ only in the conditions on $S$ and $2S$. For $p=2$ these follow from
$0\notin3S$, since $S\subseteq2S\subseteq3S$, so both constructions use the same supports. For
$p=3$ we have $S\subseteq3S$, so construction~A imposes exactly one extra condition,
$0\notin2S$; Example~\ref{ex:19} shows that this matters. For $p\ge5$ both extra conditions
are meaningful. For $p=2$, Theorem~\ref{thm:te} shows that $W$ is triply even, so after
puncturing every component of $\ket{\bar0}$ has weight divisible by $8$ and every component
of $\ket{\bar1}$ has weight $\equiv7\pmod8$; hence $T^{\otimes(n-1)}$ implements $T^\dagger$
exactly.

\begin{example}\label{ex:19}
Consider the ternary cyclic code of length $20$ with generator polynomial
$$2+2x+x^2+x^3+2x^4+2x^6+x^8+x^{10}+2x^{11}+2x^{12}+x^{13}+x^{14}.$$
Its spectral support is $S=C_2\cup C_5$, with $C_2=\{2,6,14,18\}$ and $C_5=\{5,15\}$. Since
$5+15\equiv0$, we have $0\in2S$, so the resulting code is not self-orthogonal; but $0\notin3S$, so
construction~B applies. Puncturing once gives the $[[19,1,3]]_3$ code with
\[
H=\left(\begin{matrix}
0&0&0&0&0&2&2&1&1&2&0&2&0&1&0&1&2&2&1\\ \hline
1&0&0&0&0&1&0&1&0&2&2&1&2&2&1&2&2&0&1\\
0&1&0&0&0&1&2&2&0&1&2&0&1&1&2&0&0&0&2\\
0&0&1&0&0&2&0&0&0&2&1&1&0&2&1&0&2&2&1\\
0&0&0&1&0&1&0&2&2&1&2&2&1&2&2&0&1&0&1\\
0&0&0&0&1&1&2&2&1&0&1&0&2&0&2&1&1&2&2
\end{matrix}\right),
\]
with $\lambda=-1$; the row above the line is the logical row. We verified exhaustively that
$U^{\otimes19}$, followed by a Pauli, implements logical $U^{-1}$.
Computing its distance via brute force gives $d=3$.
\end{example}

\subsection{Two MDS families}

Two infinite families follow directly from the spectral criteria. Both use the BCH bound: if the
zeros of a cyclic code include $\delta-1$ consecutive frequencies $b,b+1,\dots,b+\delta-2$, then
every nonzero codeword has weight at least $\delta$~\cite{macwilliams77}. Each family is MDS in the
sense that the relevant dual code meets the Singleton bound.

\begin{proposition}[Two MDS families]\label{prop:mds}
Let $p$ be an odd prime and $r\ge1$; the first family is nonempty for $p\ge5$, and the second for
$p\ge7$.
\begin{enumerate}
\item Construction A with $n=p-1$ and $S=\{1,\dots,r\}$, where $3r<p-1$, gives a
$[[p-1,1,r+1]]_p$ code. With $r=\lfloor(p-2)/3\rfloor$ these are the Reed--Solomon
codes~\cite{campbell2014enhanced,Krishna_2019}.
\item Construction B with $n=p+1$, $\nu=n/2$ and $S=\{\nu-r,\dots,\nu+r\}$, where $6r<p+1$, gives a
$[[p,1,2r+1]]_p$ code. With $r=\lfloor p/6\rfloor$ this is a $[[p,1,2\lfloor p/6\rfloor+1]]_p$ code.
\end{enumerate}
In both cases $\lambda=-1$.
\end{proposition}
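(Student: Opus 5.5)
For the first family I would apply Proposition~\ref{prop:A}, and for the second Proposition~\ref{prop:B}. In each case the work splits into three parts: check that $S$ is a union of cyclotomic cosets, check the sumset conditions, and get matching lower and upper bounds on the distance from the BCH and Singleton bounds.

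\emph{Family 1.} Here $n=p-1$, so $p\equiv1\pmod n$. Every cyclotomic coset is then a singleton, $\operatorname{ord}_n(p)=1$, and any $S$ is admissible. For $S=\{1,\dots,r\}$ the sumsets are intervals: $2S=\{2,\dots,2r\}$ and $3S=\{3,\dots,3r\}$. Since $3r<n$, none of them wraps around to $0$, so Proposition~\ref{prop:A} gives triorthogonality with $\lambda=n\bmod p=-1$.

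For the distance, $C^\perp$ has support $\Zn\setminus(-S)=\{0,\dots,n-r-1\}$. Its zeros are therefore the $r$ consecutive frequencies $n-r,\dots,n-1$, and the BCH bound gives weight at least $r+1$. The upper bound is the step needing care, because we need a minimum-weight vector with nonzero coordinate sum. I would argue as follows. Since $\dim C^\perp=n-r$, the code is MDS and has codewords of weight exactly $r+1$. The subcode of zero-sum vectors has the $r+1$ consecutive zeros $n-r,\dots,n-1,0$ (using Lemma~\ref{lem:schur}(2)), so the BCH bound makes all its nonzero vectors have weight at least $r+2$. Hence every weight-$(r+1)$ codeword has nonzero sum, and $d=r+1$. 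Finally, $r=\lfloor(p-2)/3\rfloor$ satisfies $3r\le p-2<p-1$.

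\emph{Family 2.} Here $n=p+1$, so $p\equiv-1\pmod n$ and the cosets are the pairs $\{j,-j\}$. Because $2\nu=n$, we have $-(\nu+i)=\nu-i$, so $S$ is closed under negation and is a union of cosets. Using $3\nu\equiv\nu$, we get $3S=\nu+\{-3r,\dots,3r\}$. This contains $0$ only if $\nu\equiv t\pmod n$ for some $|t|\le3r$, which is excluded by $3r<\nu$, that is, $6r<p+1$. Proposition~\ref{prop:B} then gives an $[[n-1,1,d]]_p=[[p,1,d]]_p$ code with $\lambda=-1$ and $d=d(W^\perp)-1$.

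The support of $W^\perp$ is $\Zn\setminus(-S)=\Zn\setminus S$. Its zeros are the $2r+1$ consecutive frequencies of $S$, so the BCH bound gives $d(W^\perp)\ge2r+2$. Since $\dim W^\perp=n-2r-1$, the Singleton bound gives $d(W^\perp)\le2r+2$. Hence $d=2r+1$. Finally, $r=\lfloor p/6\rfloor$ satisfies $6r\le p<p+1$.

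The main obstacle is the distance upper bound in family 1. Construction A needs a minimum-weight dual vector with \emph{nonzero} sum. The extra consecutive zero at frequency $0$, fed into the BCH bound for the zero-sum subcode, is what I expect to resolve it.
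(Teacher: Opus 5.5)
Your proof is correct and follows the paper's argument essentially step for step: interval sumsets for the admissibility checks, BCH plus Singleton for the dual distance, and, in family~1, the extra consecutive zero at frequency $0$, which forces every minimum-weight vector of $C^\perp$ to have nonzero coordinate sum. (One wording slip: when all cosets are singletons, what you mean is that every $S\subseteq\Zn$ is a union of cosets, not that every $S$ is ``admissible'', which in the paper refers to the sumset criterion.)
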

\begin{proof}
(1) Since $p\equiv1\pmod n$, every cyclotomic coset is a singleton. The sets $S$, $2S$ and $3S$ all
lie in $\{1,\dots,3r\}$, and $3r<n$, so none of them contains $0$. The code $C^\perp$ has the $r$
consecutive zeros $-S$, so $d(C^\perp)\ge r+1$ by the BCH bound, with equality by the Singleton
bound since $\dim C^\perp=n-r$. The subcode $C^\perp\cap\mathbf 1^\perp$ has the $r+1$ consecutive
zeros $-S\cup\{0\}$, hence minimum distance at least $r+2$, so every minimum-weight vector of
$C^\perp$ has nonzero coordinate sum, and $d=r+1$ by Proposition~\ref{prop:A}. Finally
$\lambda=n\equiv-1$.

(2) Since $p\equiv-1\pmod n$, every cyclotomic coset is $\{j,-j\}$. As $2\nu\equiv0$, we have
$-(\nu+t)\equiv\nu-t$, so $S$ is a union of cosets, and $3S=\{\nu-3r,\dots,\nu+3r\}$ (using
$3\nu\equiv\nu$), which avoids $0$ if and only if $3r<\nu$. The code $W^\perp$ has the $2r+1$
consecutive zeros $-S=S$, so $d(W^\perp)=2r+2$ by the BCH and Singleton bounds, and $d=2r+1$ by
Proposition~\ref{prop:B}.
\end{proof}

The $[[7,1,3]]_7$ code of Section~\ref{sec:results} is the smallest member of the second
family; its support $\{1,4,7\}$ is the image of $\{3,4,5\}$ under the multiplier $3^{-1}$.
In the first family $C^\perp$ is a Reed--Solomon code on the multiplicative group of $\F_p$,
and in the second $W^\perp$ is a cyclic MDS code of length $p+1$. As far as we know, the
second family has not previously been identified as a family of triorthogonal codes. This
does not follow from classical equivalences, which in general rescale coordinates, and
rescaling does not preserve unweighted triorthogonality.

The two families compare as follows. For $p\equiv1\pmod6$ the second family has
$d=(p+2)/3$ at length $p$, against $d=(p-1)/3$ at length $p-1$ for the Reed--Solomon codes,
so it gains one unit of distance for one extra qudit (for example, $[[7,1,3]]_7$ against
$[[6,1,2]]_7$, and $[[13,1,5]]_{13}$ against $[[12,1,4]]_{13}$). For $p\equiv5\pmod6$ these
become $(p-2)/3$ and $(p+1)/3$, and the Reed--Solomon codes are better. We verified both
families, including distances and transversal conditions, for $p\le19$.

\subsection{Weights and concatenation}

We consider only unit weights. By Lemma~\ref{lem:weights}, a cyclic
space $W$ can be made triorthogonal with some weights $\mathbf c$ if and only if the cyclic
code $(W^{\langle3\rangle})^\perp$, whose spectral support is $\Zn\setminus(-3S)$, contains a
vector with no zero entries; the unweighted case corresponds to $0\notin3S$. Weights that are
cubes, $\mathbf c_l=u_l^3$, give nothing essentially new: rescaling coordinate $l$ by $u_l$
turns the weighted code into an unweighted one with the same distance and logical gate,
although it is in general no longer cyclic. For $p=3$ and $p=5$ every unit is a cube, so
weights can only produce such rescaled codes; for $p\equiv1\pmod3$ they can also change the
transversal gate. For $p=3$ and $n\le44$, every cyclic space that is triorthogonal for some nonconstant
weights but not for $\mathbf c=\mathbf 1$ gives $d=1$, but we do not know whether this
holds in general.

Concatenations of construction-A codes are again construction-A codes, provided the blocks
are interleaved, so that block $b$ consists of the coordinates congruent to $b$ modulo the
outer length. Concatenating an inner code with support $S_1\subseteq\mathbb Z_{k_1}$ with an
outer code with support $S_2\subseteq\mathbb Z_{k_2}$ in this way gives the cyclic code of
length $k_1k_2$ with support $\pi^{-1}(S_1)\cup\{k_1s:s\in S_2\}$, where $\pi$ is reduction
modulo $k_1$, and its distance is $d_1d_2$ (Proposition~\ref{prop:concat} in
Appendix~\ref{app:concat}). Such concatenations appear among the supports found by our search; several
records in Table~\ref{tab:records} are of this form.

\section{Qubits: triply-even cyclic codes}\label{sec:qubit}

Following Bravyi and Haah~\cite{Bravyi_2012}, a binary matrix $G$ with rows $g^{(a)}$ is
\emph{triorthogonal} if $\sum_ig^{(a)}_ig^{(b)}_i\equiv0$ for $a\ne b$ and
$\sum_ig^{(a)}_ig^{(b)}_ig^{(c)}_i\equiv0$ for distinct $a,b,c$; the odd-weight rows, forming
$G_1$, are logical, and the even-weight rows, forming $G_0$, are stabilizers. A binary code
$C$ is \emph{triply even} if all its weights are divisible by $8$, and \emph{classically
triorthogonal} if $\sum_lx_ly_lz_l\equiv0$ for all $x,y,z\in C$. Since $x*x=x$ over $\F_2$,
the cases $x=y=z$ and $x=y$ of this condition say that $C$ has even weights and even pairwise
overlaps. Hence the rowspan of $G_0$ is classically triorthogonal if and only if $G_0$
satisfies the Bravyi--Haah conditions.

Every triply-even code is classically triorthogonal~\cite{Betsumiya_2012}, but not
conversely: the binary repetition code of length $4$ is classically triorthogonal and not
triply even. For cyclic codes of odd length, however, the converse holds. This follows from
McEliece's theorem~\cite{MCELIECE1972}: if $C$ is a nonzero binary cyclic code of odd length
and $t$ is the least integer with $0\in tS$, then all weights of $C$ are divisible by
$2^{t-1}$, and not all by $2^t$.

\begin{theorem}\label{thm:te}
A binary cyclic code $C$ of odd length is triply even if and only if its spectral support
satisfies $0\notin3S$, that is, if and only if $C$ is classically triorthogonal.
\end{theorem}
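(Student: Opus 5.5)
The plan is to prove two equivalences: triply even $\Leftrightarrow 0\notin3S$, and classically triorthogonal $\Leftrightarrow 0\notin3S$. The theorem is their combination. The zero code is trivial ($S=\emptyset$, and all three properties hold), so assume $C\neq0$.

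\emph{Classical triorthogonality.} Since $\sum_lx_ly_lz_l=\langle\mathbf 1,x*y*z\rangle$, the condition says exactly that every vector of $C^{\langle3\rangle}$ has even coordinate sum. Over $\F_2$ this is the same as zero coordinate sum. By Lemma~\ref{lem:schur}(1) the spectral support of $C^{\langle3\rangle}$ is $3S$, and by Lemma~\ref{lem:schur}(2) the sums vanish if and only if $0\notin3S$. This is the argument already used for Proposition~\ref{prop:B}, specialized to $p=2$, where odd length gives $p\nmid n$.

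\emph{Triple evenness.} Let $t$ be the least integer with $0\in tS$. It exists because $S$ is a nonempty union of $2$-cyclotomic cosets: for $s\in S$ we have $ns=0$ and $s\in S$, so $0\in nS$. McEliece's theorem says all weights are divisible by $2^{t-1}$ and not all by $2^t$.

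\begin{itemize}
\item If $0\notin3S$, then $0\notin S$ and $0\notin2S$ as well, because $S\subseteq2S\subseteq3S$ for $p=2$ (noted after Lemma~\ref{lem:schur}). Hence $t\ge4$, and every weight is divisible by $2^{t-1}\ge8$.
\item If $0\in3S$, then $t\le3$, so some weight is not divisible by $2^t$, which divides $8$. That weight is not divisible by $8$, and $C$ is not triply even.
\end{itemize}

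There is no real obstacle, since McEliece's theorem carries all the weight. The only points needing care are the monotonicity $S\subseteq2S\subseteq3S$, which turns ``$t\ge4$'' into the single condition $0\notin3S$, and the trivial zero-code case.
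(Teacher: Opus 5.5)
Your proof is correct and essentially the paper's. You use Lemma~\ref{lem:schur} to show that classical triorthogonality is equivalent to $0\notin3S$, and McEliece's theorem together with $S\subseteq2S\subseteq3S$ to show that $0\notin3S$ implies triple evenness. The one difference is the direction triply even $\Rightarrow 0\notin3S$: you use the sharpness half of McEliece's theorem, whereas the paper cites the general fact that triply-even codes are classically triorthogonal~\cite{Betsumiya_2012}, and either argument works.
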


\begin{proof}
Since $\sum_lx_ly_lz_l=\langle\mathbf 1,x*y*z\rangle$, $C$ is classically triorthogonal if
and only if $C^{\langle3\rangle}$ has even weights, i.e., if and only if $0\notin3S$, by
Lemma~\ref{lem:schur}. A triply-even code is classically
triorthogonal~\cite{Betsumiya_2012}. Conversely, if $0\notin3S$ then also $0\notin S\cup2S$,
since $S\subseteq2S\subseteq3S$, so the least $t$ with $0\in tS$ satisfies $t\ge4$. By
McEliece's theorem, all weights of $C$ are divisible by $2^{t-1}\ge8$.
\end{proof}

We have not found Theorem~\ref{thm:te} stated in the literature. It has two immediate
consequences. First, if $G$ is a triorthogonal matrix of odd length whose stabilizer rows
$G_0$ span a cyclic code, then that code is triply even, since it is classically
triorthogonal by the remark above. Second, every support with $0\notin3S$ gives, via
construction~A, a code on which the transversal $T$ gate acts exactly: if $G_0$ spans a
triply-even code and $G_1=\mathbf 1$ with $n$ odd, every component $\mathbf 1+g$ of
$\ket{\bar1}$ has weight $n-\wt(g)\equiv n\pmod8$, so $T^{\otimes n}$ implements $T^n$ with no
Clifford correction~\cite{Rengaswamy2020,Jain_2025}.

For example, the stabilizer code of the $[[15,1,3]]_2$ code~\cite{MSD} has $S=\{1,2,4,8\}$ up
to a multiplier. No three elements of $S$ sum to $0$ modulo $15$, but $1+2+4+8=15$, so
$t=4$; indeed this is the $[15,4,8]$ simplex code, all of whose nonzero weights equal $8$.

\section{Search strategy}\label{sec:search}

By Propositions~\ref{prop:A} and~\ref{prop:B}, each code is determined by its spectral
support $S$, a union of $p$-cyclotomic cosets modulo $n$. We call $S$ \emph{admissible} for a
construction if it satisfies the corresponding criterion: $0\notin S\cup2S\cup3S$ for
construction~A, and $0\notin3S$ for construction~B.

\paragraph{Enumeration.} For every length $n$ with $p\nmid n$ in the stated ranges, we
enumerate all admissible supports by a depth-first search over unions of cosets, abandoning a
branch as soon as the support becomes inadmissible; since admissibility is inherited by
subsets, this loses nothing. Enlarging $S$ shrinks $C^\perp$ (or $W^\perp$), so it cannot
decrease the distance, nor, by Lemma~\ref{lem:mono}, increase the output error of
distillation. We therefore keep only \emph{maximal} admissible supports, those that cannot be
enlarged by another coset. This is maximality among cyclic codes; we do not consider
non-cyclic extensions. Multipliers $j\mapsto uj$, $u\in\Zn^\times$, permute coordinates and
map $S$ to $uS$, and we keep one support per multiplier orbit; since two cyclic codes can be
permutation-equivalent without being related by a multiplier, our counts are of multiplier
classes and may exceed the number of inequivalent codes.

As an independent check, we repeated the enumeration in Magma~\cite{magma} with a
generalized Bron--Kerbosch algorithm. The usual algorithm finds maximal cliques of a graph,
which would capture only pairwise compatibility of cosets, whereas admissibility involves
sums of three elements. We therefore filter the candidate cosets at every step by the full
admissibility test applied to the current support, and report a support only when no
remaining coset can be added. Because admissibility is inherited by subsets, every maximal
support is reached along some branch, so the search is exhaustive. It finds the same maximal
supports as ours, up to multiplier equivalence and with the same dimensions, at every length:
odd $n\le763$ for $p=2$, $n\le118$ for $p=3$, and $n\le96$ for $p=5,7$.

\paragraph{Distances.} Distances are computed by an exhaustive meet-in-the-middle search,
using cyclic symmetry and scaling to fix $v_0=1$. Every reported value comes with an explicit
minimum-weight vector, checked directly, and the exhaustive search excludes all smaller
weights. Where the search reaches its size limit, it gives a rigorous lower bound; for $p=2$,
a randomized phase may also give an upper bound. Where feasible, we resolved remaining lower
bounds with the Brouwer--Zimmermann algorithm in Magma, which settled $27$ distances,
including that of $[[49,1,7]]_7$. We also use Proposition~\ref{prop:concat}. If a maximal support coincides, up to a
multiplier, with the support of a concatenation of codes of distances $d_1$ and $d_2$, then
$d=d_1d_2$; if it strictly contains such a support, then $d\ge d_1d_2$, since enlarging $S$
cannot decrease the distance. All distances for $p=3$ and $p=5$ are exact; lower bounds remain only for
$p=7$ at $n\in\{80,87,96\}$ and for $p=2$ at $n\ge341$. In the tables, the best distance at a
length is reported as exact only if the distance of every maximal support at that length is
known exactly. Independently, Brouwer--Zimmermann distances agree with ours for all
$n\le225$ when $p=2$, $n\le118$ when $p=3$, and $n\le64$ when $p=5,7$, $884$ codes in total.

\paragraph{Checks.} Every code was rechecked directly against Definition~\ref{def:tri}
(Bravyi--Haah for $p=2$), and every binary code was checked to be triply even. The
transversal gate was verified exhaustively, with exact phases, for $[[15,1,3]]_2$,
$[[19,1,3]]_3$, $[[11,1,3]]_5$ and $[[7,1,3]]_7$. 

\section{Results}\label{sec:results}

We searched every length $n$ with $p\nmid n$ up to $121$ for $p=3$, up to $96$ for $p=5,7$,
and every odd $n$ up to $763$ for $p=2$. Up to multiplier equivalence there are $92$ maximal
supports for construction~A and $195$ for construction~B when $p=3$, $248$ and $395$ when
$p=5$, and $624$ and $1600$ when $p=7$. Table~\ref{tab:records} compares, for each distance,
the shortest code we find with the shortest punctured Reed--Muller code and the shortest
concatenation. Appendix~\ref{app:tables} lists the best code at every length, and
Appendix~\ref{app:supports} gives the spectral supports of selected codes.

\paragraph{$p=3$.} The shortest cyclic distance-$3$ code is the $[[19,1,3]]_3$ code of
Example~\ref{ex:19}, from construction~B; requiring self-orthogonality, i.e.\ $0\notin2S$, as
in~\cite{Krishna_2019}, would raise this to $[[39,1,3]]_3$. Construction~A first reaches
$d\ge3$ at $[[40,1,4]]_3$. Near length $80$ both constructions jump to distance $7$, with
$[[79,1,7]]_3$ (B) and $[[80,1,7]]_3$ (A), and $[[121,1,8]]_3$ (A) is the first code with
$d\ge8$. For $3\le d\le8$ the cyclic codes are shorter than both Reed--Muller codes and
concatenations; at $d=2$ the cyclic code is the Reed--Muller code $[[8,1,2]]_3$. At the
Reed--Muller lengths themselves, the search also improves on Reed--Muller codes. At $n=26$
the only maximal support, up to multiplier equivalence, strictly contains the Reed--Muller
support, giving a $[[26,1,2]]_3$ code with the same distance but a higher threshold
(Table~\ref{tab:perf}). At $n=80$ the search finds $[[80,1,7]]_3$, against $[[80,1,5]]_3$ for
the Reed--Muller code.

\paragraph{$p=5$.} The shortest cyclic distance-$3$ code is $[[11,1,3]]_5$ (B), less than
half the length of the shortest Reed--Muller code with $d\ge3$, $[[24,1,3]]_5$. At larger
distances concatenation does better. The records $[[16,1,4]]_5$ and $[[64,1,8]]_5$ are the
twofold and threefold concatenations of the $[[4,1,2]]_5$ Reed--Solomon code
(Proposition~\ref{prop:concat}), and at $d=5,6$ the concatenation
$[[11,1,3]]_5\circ[[4,1,2]]_5$, of length $44$, is shorter than every cyclic code, including
$[[48,1,5]]_5$. At the Reed--Muller length $24$, construction~A gives $[[24,1,4]]_5$, against
$[[24,1,3]]_5$ for the Reed--Muller code.

\paragraph{$p=7$.} The shortest cyclic codes are $[[7,1,3]]_7$ and $[[28,1,5]]_7$ (B), and
$[[12,1,3]]_7$ and $[[29,1,6]]_7$ (A); for $3\le d\le6$ they are shorter than both
Reed--Muller codes and concatenations. At $d=7$, the $[[49,1,7]]_7$ code (B) has the same
length as the self-concatenation of $[[7,1,3]]_7$, which has distance $9$, and no code with
$d\ge8$ is certified in our range. At the Reed--Muller length $48$, construction~A gives
$[[48,1,6]]_7$, against $[[48,1,4]]_7$ for the Reed--Muller code.

\begin{table}[h]
\centering
\small
\begin{tabular}{@{}c|ccc|ccc|ccc@{}}
\toprule
& \multicolumn{3}{c|}{$p=3$} & \multicolumn{3}{c|}{$p=5$} & \multicolumn{3}{c}{$p=7$}\\
$d$ & RM & conc. & cyclic & RM & conc. & cyclic & RM & conc. & cyclic\\
\midrule
$2$ & $8$ & $64$ & $8$ (A) & $4$ & $16$ & $4$ (A) & $6$ & $36$ & $6$ (A)\\
$3$ & $80$ & $64$ & $19$ (B) & $24$ & $16$ & $11$ (B) & $48$ & $36$ & $7$ (B)\\
$4$ & $80$ & $64$ & $40$ (A, B) & $124$ & $16$ & $16^\dagger$ (A) & $48$ & $36$ & $28$ (B)\\
$5$ & $80$ & $152$ & $79$ (B) & $624$ & $44$ & $48$ (A) & $342$ & $42$ & $28$ (B)\\
$6$ & $242$ & $152$ & $79$ (B) & $624$ & $44$ & $64^\dagger$ (A) & $342$ & $42$ & $29$ (A)\\
$7$ & $242$ & $320$ & $79$ (B) & $624$ & $64$ & $64^\dagger$ (A) & $2400$ & $49$ & $49$ (B)\\
$8$ & $242$ & $320$ & $121$ (A) & $624$ & $64$ & $64^\dagger$ (A) & $2400$ & $49$ & --\\
\bottomrule
\end{tabular}
\caption{Shortest length $n$ of an $[[n,1,d']]_p$ code with $d'\ge d$. ``RM'': punctured generalized
Reed--Muller codes $\mathrm{GRM}_p(r,m)$~\cite{CampbellAnwarBrowne,campbell2014enhanced,Saha:2025frb}
of length $p^m-1$, with the largest admissible $r$ ($3r<m(p-1)$) for each $m$; their distances
follow from~\cite{DGM1970}. ``conc.'': the shortest concatenation of
two or three codes from the other two columns, using $d=d_1d_2$. ``cyclic'': the shortest length at
which our search certifies a cyclic code with distance at least $d$ (construction in parentheses);
$^\dagger$ marks codes that are themselves concatenations of $[[4,1,2]]_5$
(Proposition~\ref{prop:concat}); ``--'': no code with $d'\ge8$ is certified for $p=7$ in the search range. }
\label{tab:records}
\end{table}

\paragraph{$p=2$.} For qubits, both constructions use the supports with $0\notin3S$, which by
Theorem~\ref{thm:te} are exactly the triply-even cyclic codes; we also checked
triple-evenness directly. The main output is therefore a classification. Of the odd
lengths between $3$ and $763$, only $84$ admit a nonzero triply-even cyclic code, and at these lengths
every triply-even cyclic code is, up to multiplier equivalence, a subcode of one of $879$
maximal codes, which we list.

The construction-A records are $[[15,1,3]]_2$, $[[85,1,5]]_2$, $[[127,1,7]]_2$ and
$[[223,1,9]]_2$, all with exact distances; construction~B typically gives codes one shorter
with distance one lower. No support with $n\le763$ is certified to have distance $11$ or
more, although some distances for $n\ge341$ are known only as lower bounds. These codes are
not competitive with the non-cyclic $[[49,1,5]]_2$~\cite{Bravyi_2012},
$[[95,1,7]]_2$~\cite{Sullivan_2024} and $[[185,1,9]]_2$~\cite{Jain_2025} codes.

The search does, however, clarify the role of punctured Reed--Muller codes, whose spectral
support is $\{j:1\le\wt_2(j)\le r\}$ up to a multiplier, where $\wt_2(j)$ is the number of
ones in the binary expansion of $j$~\cite{macwilliams77}. At $n=15$, $31$, $127$ and $255$
this support is maximal, and $\mathrm{RM}(1,4)^*$~\cite{MSD} and $\mathrm{RM}(2,7)^*$ are
the records $[[15,1,3]]_2$ and $[[127,1,7]]_2$. At $n=63$ and $511$, by contrast, the
Reed--Muller support is a proper subset of a maximal support with the same distance. At
$n=63$ the larger code, with a support of size $12$ rather than $6$, has
$\epsilon_{\rm out}\approx21\,\epsilon^3$ and threshold $0.104$ under the noise model of
Section~\ref{sec:perf}, against $651\,\epsilon^3$ and $0.034$ for $\mathrm{RM}(1,6)^*$. The
record $[[85,1,5]]_2$ and $[[223,1,9]]_2$ codes are not of Reed--Muller length.

\section{Distillation performance}\label{sec:perf}

We now evaluate the smallest codes in a single round of distillation. Each noisy input is
modelled as $(1-\epsilon)\ket M\bra M+\frac{\epsilon}{p-1}\sum_{j\ne0}Z^j\ket M\bra MZ^{-j}$
with $\ket M=U\ket+$; this is the twirled form of depolarizing noise of strength
$p_{\rm dep}=\frac p{p-1}\epsilon$~\cite{ACB,CampbellAnwarBrowne}. Measuring the stabilizers
and postselecting accepts an error $Z(j)$ if and only if $j\in\mathcal A^\perp$, and the output
is wrong if and only if $j\notin\mathcal H^\perp$. Hence
\begin{gather*}
P_{\rm acc}=\sum_{j\in\mathcal A^\perp}P(j),\qquad
\epsilon_{\rm out}=\frac1{P_{\rm acc}}\sum_{j\in\mathcal A^\perp\setminus\mathcal H^\perp}P(j),\\
P(j)=(1-\epsilon)^{n-\wt(j)}\Big(\frac\epsilon{p-1}\Big)^{\wt(j)},
\end{gather*}
which we evaluate exactly from the weight distributions of $\mathcal A^\perp$ and
$\mathcal H^\perp$, obtained by the MacWilliams identity~\cite{macwilliams77}. Multiplying $j$
by a nonzero scalar preserves its weight and whether it is accepted, so the output error is
again spread evenly over the $p-1$ nontrivial directions, and the output has the same form as
the input.

Our search keeps only maximal supports (Section~\ref{sec:search}). Under this noise model
that loses nothing, because enlarging a support can only improve the output error:

\begin{lemma}\label{lem:mono}
Consider two codes with the same logical rows and stabilizer spaces
$\mathcal A\subseteq\mathcal A'$. Under the noise model above, for every
$\epsilon\le(p-1)/p$ the larger code has $\epsilon'_{\rm out}\le\epsilon_{\rm out}$. In
particular, its threshold is at least as high.
\end{lemma}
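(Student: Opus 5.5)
The plan is to rewrite $\epsilon_{\rm out}$ as a ratio of masses of subspaces under the product measure $P$, and then exploit the positivity of the Fourier transform of $P$, which is exactly where the hypothesis $\epsilon\le(p-1)/p$ enters.

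\emph{Setting up.} Let $L$ be the span of the common logical rows. Then $\mathcal H=\mathcal A+L$ and $\mathcal H'=\mathcal A'+L$, so $\mathcal A'^\perp\subseteq\mathcal A^\perp$ and $\mathcal H'^\perp=\mathcal A'^\perp\cap L^\perp\subseteq\mathcal H^\perp=\mathcal A^\perp\cap L^\perp$. Since the accepted errors are $\mathcal A^\perp$ and the good ones are $\mathcal H^\perp$,
\[
1-\epsilon_{\rm out}=\frac{P(\mathcal H^\perp)}{P(\mathcal A^\perp)} .
\]
It therefore suffices to show that $V\mapsto P(V\cap L^\perp)/P(V)$ does not decrease when the subspace $V$ is replaced by a subspace $V'\subseteq V$. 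The threshold statement then follows immediately, because the threshold is the largest $\epsilon$ with $\epsilon_{\rm out}<\epsilon$.

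\emph{Fourier step.} $P$ is a product of single-site distributions $\mu(0)=1-\epsilon$ and $\mu(j)=\epsilon/(p-1)$ for $j\ne0$. Its character transform is
\[
\widehat\mu(u)=1 \ (u=0),\qquad \widehat\mu(u)=1-\tfrac{p}{p-1}\epsilon=:x \ (u\ne0),
\]
so $\widehat P(u)=x^{\wt(u)}$ with $x\in[0,1]$ precisely when $\epsilon\le(p-1)/p$. Poisson summation gives, for any subspace $V$,
\[
P(V)=\frac{|V|}{p^n}\,F(V^\perp),\qquad F(X):=\sum_{u\in X}x^{\wt(u)} .
\]
Also $|\mathcal A^\perp/\mathcal H^\perp|=p^k=|\mathcal A'^\perp/\mathcal H'^\perp|$, because the logical rows are independent modulo both $\mathcal A$ and $\mathcal A'$. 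Hence the claim reduces to the purely combinatorial inequality
\[
\frac{F(\mathcal A'+L)}{F(\mathcal A')}\ \ge\ \frac{F(\mathcal A+L)}{F(\mathcal A)},
\]
where $F$ is a weight enumerator evaluated at a point $x\in[0,1]$, so every term is nonnegative.

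\emph{Main obstacle.} The difficulty is this last monotonicity. I would first reduce to adding a single vector, $\mathcal A'=\mathcal A+\langle a\rangle$, and induct. Writing
\[
F(\mathcal A+L)=\sum_{\ell\in L}F(\mathcal A+\ell),
\]
the inequality becomes a comparison between the coset sums $F(\mathcal A+\ell+ca)$ and $F(\mathcal A+ca)$ for $c\in\F_p$. The intended tool is a correlation (FKG/Harris-type) inequality for the nonnegative weights $x^{\wt(u)}$. Alternatively, I would work in the primal picture: partition $\mathcal A^\perp$ into cosets of $\mathcal A'^\perp$, and show that the trivial coset has the largest good fraction $P(\cdot\cap L^\perp)/P(\cdot)$. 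Its ingredient is the easier fact, immediate from $\widehat P\ge0$, that a subspace has at least the mass of any of its cosets: $P(t+V)\le P(V)$.

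I expect that turning this coset-mass domination into the required \emph{ratio} domination is the real work. If a direct argument fails, I would try expressing both ratios as averages of $\widehat P$ over nested subspaces and using the positivity of $x^{\wt(u)}$ termwise.
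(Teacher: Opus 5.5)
\textbf{There is a genuine gap, at exactly the step you flag.} Your reduction is correct. With $x=1-\frac{p}{p-1}\epsilon\in[0,1]$, Poisson summation gives $1-\epsilon_{\rm out}=p^{-k}F(\mathcal A+L)/F(\mathcal A)$. Since $L\cap\mathcal A'=0$, the lemma is then equivalent to $F(\mathcal A'+L)\,F(\mathcal A)\ge F(\mathcal A+L)\,F(\mathcal A')$.

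But this inequality is the whole content of the lemma, and the proposal stops there. The single-vector induction, the coset comparison and the appeal to an ``FKG/Harris-type inequality'' are only intentions: you identify no lattice, no monotone functions and no lattice condition on a measure. The fact $P(t+V)\le P(V)$ compares masses, not conditional fractions, so by itself it cannot give ratio domination. Your claim that the trivial coset of $\mathcal A'^\perp$ has the largest good fraction is stronger than you need, since you only need it to beat the $P$-weighted average over cosets, which is $1-\epsilon_{\rm out}$. It is also itself unproved.

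\textbf{The missing idea} is to read the weights $x^{\wt(u)}$ probabilistically; this is exactly where $x\in[0,1]$ is used. Let $T\subseteq\{1,\dots,n\}$ contain each coordinate independently with probability $x$. Then $x^{\wt(u)}=\Pr[\mathrm{supp}(u)\subseteq T]$, so $F(X)=\mathbb E\,|X_T|$, where $X_T$ is the subspace of vectors of $X$ supported in $T$. Tilt by $|\mathcal A_T|$, taking $\nu(T)\propto x^{|T|}(1-x)^{n-|T|}|\mathcal A_T|$. Your inequality then becomes $\mathbb E_\nu[h]\ge\mathbb E_\nu[f]\,\mathbb E_\nu[g]$, with $f=|(\mathcal A+L)_T|/|\mathcal A_T|$, $g=|\mathcal A'_T|/|\mathcal A_T|$ and $h=|(\mathcal A'+L)_T|/|\mathcal A_T|$. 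Three facts close the argument:
\begin{enumerate}
\item $h\ge fg$ pointwise. Use $|A||B|=|A\cap B|\,|A+B|$ with $A=(\mathcal A+L)_T$ and $B=\mathcal A'_T$. Here $A\cap B=\mathcal A_T$ by the modular law and $L\cap\mathcal A'=0$, and $A+B\subseteq(\mathcal A'+L)_T$.
\item $f$ and $g$ increase with $T$. They are the sizes of the images of $(\mathcal A+L)_T$ and $\mathcal A'_T$ in $(\mathcal A+L)/\mathcal A$ and $\mathcal A'/\mathcal A$.
\item $\nu$ satisfies the FKG lattice condition. Since $\mathcal A_{T\cap T'}=\mathcal A_T\cap\mathcal A_{T'}$ and $\mathcal A_{T\cup T'}\supseteq\mathcal A_T+\mathcal A_{T'}$, the map $T\mapsto\dim\mathcal A_T$ is supermodular.
\end{enumerate}
FKG then gives $\mathbb E_\nu[fg]\ge\mathbb E_\nu[f]\,\mathbb E_\nu[g]$, and combining with the first fact finishes the proof.

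\textbf{Relation to the paper.} This is precisely the paper's proof in dual form. The paper does not use Poisson summation. It resamples the noise directly, as a random set $R$ of density $\eta=1-x$ followed by a uniform vector supported on $R$, and works with $(\mathcal A^\perp)_R$ in place of $\mathcal A_T$. Because $\dim(\mathcal A^\perp)_R=|R|-\dim\mathcal A+\dim\mathcal A_{R^c}$, its tilted measure on $R=T^c$ is your $\nu$. Its functions $f,g,h$ are constant multiples of the ones above, decreasing in $R$ rather than increasing in $T$.
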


\noindent The proof, given in Appendix~\ref{app:mono}, rewrites the comparison as a
correlation inequality and applies the FKG inequality~\cite{FKG1971}.

Table~\ref{tab:perf} gives, for the smallest new codes and for comparison codes, the overhead
exponent $\gamma=\log n/\log d$~\cite{Bravyi_2012}, which governs the cost of using a code
recursively; the leading behaviour of $\epsilon_{\rm out}$ for small $\epsilon$; the threshold
$\epsilon^*$, below which $\epsilon_{\rm out}<\epsilon$; and the acceptance probability at
$\epsilon=0.01$.

\begin{table}[h]
\centering
\small
\begin{tabular}{@{}lcccc@{}}
\toprule
code & $\gamma$ & $\epsilon_{\rm out}$ for small $\epsilon$ & threshold $\epsilon^*$ & $P_{\rm acc}$ at $\epsilon=0.01$\\
\midrule
$[[19,1,3]]_3$ (B) & $2.68$ & $5.25\,\epsilon^3$ & $0.226$ & $0.83$\\
$[[40,1,4]]_3$ (A) & $2.66$ & $16.25\,\epsilon^4$ & $0.223$ & $0.67$\\
$[[8,1,2]]_3$ (RM) & $3.00$ & $2\,\epsilon^2$ & $0.211$ & $0.92$\\
$[[26,1,2]]_3$ (RM) & $4.70$ & $6.5\,\epsilon^2$ & $0.066$ & $0.77$\\
$[[26,1,2]]_3$ (A) & $4.70$ & $6.5\,\epsilon^2$ & $0.123$ & $0.77$\\
\midrule
$[[11,1,3]]_5$ (B) & $2.18$ & $1.44\,\epsilon^3$ & $0.346$ & $0.90$\\
$[[24,1,3]]_5$ (RM) & $2.89$ & $1.5\,\epsilon^3$ & $0.263$ & $0.79$\\
$[[4,1,2]]_5$ (RS) & $2.00$ & $1.5\,\epsilon^2$ & $0.363$ & $0.96$\\
$[[16,1,4]]_5$ (RS$\circ$RS) & $2.00$ & $3.38\,\epsilon^4$ & $0.363$ & $0.85$\\
\midrule
$[[7,1,3]]_7$ (B) & $1.77$ & $0.97\,\epsilon^3$ & $0.446$ & $0.93$\\
$[[12,1,3]]_7$ (A) & $2.26$ & $1.11\,\epsilon^3$ & $0.366$ & $0.89$\\
$[[6,1,2]]_7$ (RS) & $2.58$ & $2.5\,\epsilon^2$ & $0.232$ & $0.94$\\
\bottomrule
\end{tabular}
\caption{Exact single-round performance under twirled depolarizing noise. RM: punctured Reed--Muller
codes~\cite{CampbellAnwarBrowne,campbell2014enhanced,Saha:2025frb}; RS: Reed--Solomon
codes~\cite{campbell2014enhanced,Krishna_2019}.}
\label{tab:perf}
\end{table}

For $p=3$ and $p=7$ the new codes improve on the shortest Reed--Muller and Reed--Solomon
codes, respectively, in both overhead exponent and threshold. The codes $[[19,1,3]]_3$ and
$[[40,1,4]]_3$ have $\gamma\approx2.7$ and thresholds above $0.22$, against $\gamma=3$ and
$0.211$ for $[[8,1,2]]_3$, and $[[7,1,3]]_7$ has $\gamma=1.77$ and threshold $0.446$, against
$2.58$ and $0.232$ for $[[6,1,2]]_7$. The search also improves on Reed--Muller codes at fixed
length: the cyclic $[[26,1,2]]_3$ code, whose support strictly contains the Reed--Muller
support, has the same leading term but almost twice the threshold. For larger $p$,
Proposition~\ref{prop:mds} gives, for example, $[[13,1,5]]_{13}$ with
$\epsilon_{\rm out}\approx0.062\,\epsilon^5$ and threshold $0.562$, against
$0.286\,\epsilon^4$ and $0.479$ for the Reed--Solomon code $[[12,1,4]]_{13}$. For $p=5$ the
comparison is less favourable: although $[[11,1,3]]_5$ improves on the shortest Reed--Muller
code of the same distance, recursive use of the $[[4,1,2]]_5$ Reed--Solomon code has a
smaller overhead exponent and a higher threshold.

For construction~A with $p\equiv1\pmod3$, the logical gate is $U^\lambda$ with
$\lambda=n\bmod p$. If $\lambda=c^3$ is a cube, $U^\lambda\ket+$ is Clifford-equivalent to
$U\ket+$, via the Clifford gate $\ket j\mapsto\ket{cj}$; if $\lambda$ is not a cube, it is
not~\cite{HowardVala,Bengtsson_2014}. Among the codes in Table~\ref{tab:perf} this affects
only $[[12,1,3]]_7$, with $\lambda=5$, whose output therefore lies in a different Clifford
class from its input. Construction-B codes have $\lambda=-1$, which is a cube, and for
$p=3,5$ every unit is a cube.

Three caveats apply. First, single-round thresholds and $\gamma$ do not by themselves
determine the overhead of a full protocol. Second, Table~\ref{tab:perf} assumes depolarizing
noise. Twirling makes the noise diagonal in the basis $Z^j\ket M$ but does not make the
weights of different $j$ equal, and thresholds for biased noise can differ, as can the
conclusion of Lemma~\ref{lem:mono}, whose proof uses the depolarizing form of the noise.
Third, enlarging a support cannot raise $\epsilon_{\rm out}$, by Lemma~\ref{lem:mono}, but it
does lower the acceptance probability, which also affects the overall overhead.

\section{Conclusion}\label{sec:conclusion}

For cyclic codes of length coprime to $p$, triorthogonality reduces to a sumset condition on
the spectral support. For qubits this gives a complete description of triply-even cyclic
codes of odd length, by the condition $0\notin3S$, and we determine all of them up to length
$763$. For qudits it makes an exhaustive search over two natural constructions feasible. The
resulting codes improve substantially on punctured Reed--Muller codes, in both length and
threshold; for $p=3$, and for $p=7$ at distances $3\le d\le6$, they are also shorter than
concatenations of known codes, while for $p=5$ recursive use of the $[[4,1,2]]_5$
Reed--Solomon code remains better. The criteria also give infinite families directly,
including $[[p,1,2\lfloor p/6\rfloor+1]]_p$ codes for every prime $p\ge7$.

The approach has limitations. We considered two constructions, with unit weights (see
Section~\ref{sec:cyclic}), lengths coprime to $p$, and equivalence under multipliers only, and
some distances are known only as lower bounds. Cyclic codes are also unlikely to be optimal
among all codes: for qubits, non-cyclic constructions are shorter at $d\ge5$, and for qutrits
we have found a better non-cyclic code with $d=3$, to appear in forthcoming work.

Several questions remain open. The smallest distance-$3$ qutrit triorthogonal code is
unknown; a qudit analogue of the classification of~\cite{Nezami_2022,WillsJainSingh2026}
would settle the question. Puncturing cyclic spaces at several points gives codes with
$k>1$ and, with suitable logical rows, codes for multi-qudit gates such as $CCZ$. The sumset
criteria hold verbatim for abelian group codes of order coprime to $p$, and extended cyclic
codes can be treated similarly. They also extend to higher levels of the Clifford
hierarchy: for the gate $\operatorname{diag}(\omega^{j^t})$ with $t<p$, the analogue of
construction~B requires $0\notin tS$, and for qubits a cyclic code of odd length with
$0\notin4S$ has all weights divisible by $16$, by McEliece's theorem. Constacyclic codes may
give further families when $p\equiv1\pmod3$. Finally, it would be interesting to compute
multi-round overheads for the qudit codes found here.

\section*{Acknowledgements}
The author thanks Prof. P.S. Satsangi for discussions on qudit magic state distillation some years
ago, and for earlier work on qudit contextuality~\cite{DPS} that motivated the author's interest in
qudit fault tolerance. The author also thanks T. Saha for sharing his unpublished notes on qudit CSS-T codes, in
which he independently obtained the sufficiency of the cubic conditions without self-orthogonality
for $p\ge5$. This work was supported by DST-SERB grant (CRG/2021/009137).

Large language models (Google Gemini 3.1 and 3.6; Theo Collaborator Beta using GPT-5.5; Claude Sonnet 5.5 and Opus 5.5) assisted with coding, drafting, and revision. I verified all results and code and take full responsibility for this paper.

\section*{Code and data availability}
The search code, a test suite, and the complete results (every maximal admissible spectral support, with
its dimension and certified distance) are available at
\url{https://github.com/spdei/cyclic-triorthogonal-codes}.

\bibliographystyle{quantum2}
\bibliography{qudit2026}

\appendix

\section{Deferred proofs}
\subsection{Spectral description of concatenation}\label{app:concat}

Let $Q_1$ and $Q_2$ be construction-A codes of lengths $k_1$ and $k_2$, with $p\nmid k_1k_2$,
stabilizer codes $C_1$ and $C_2$ with spectral supports $S_1$ and $S_2$, and distances $d_1$
and $d_2$. In their concatenation, with $Q_1$ as the inner code, each of $k_2$ blocks of $k_1$
qudits is encoded with $Q_1$, and the stabilizers of $Q_2$ act through the inner logical
operators. Since the logical row of $Q_1$ is $\mathbf 1$, the stabilizer space $\mathcal A$ of
the concatenation is spanned by the vectors of $C_1$ placed on a single block, together with
the vectors whose restriction to block $b$ is $s_b\mathbf 1$ for some $s\in C_2$. Its logical
row is $\mathbf 1$.

We index $\mathbb Z_{k_1k_2}$ as $l=b+k_2a$ with $a\in\mathbb Z_{k_1}$ and $b\in\mathbb Z_{k_2}$,
so that block $b$ consists of the coordinates $l\equiv b\pmod{k_2}$. Let
$\pi:\mathbb Z_{k_1k_2}\to\mathbb Z_{k_1}$ be reduction modulo $k_1$, and write
$k_1\cdot S_2=\{k_1s:s\in S_2\}$ for the dilation of $S_2$, not to be confused with the
sumset $k_1S_2$.

\begin{proposition}[Concatenation]\label{prop:concat}
The set $S=\pi^{-1}(S_1)\cup k_1\cdot S_2\subseteq\mathbb Z_{k_1k_2}$ is admissible for
construction~A, the resulting code is the concatenation of $Q_1$ with $Q_2$, and its distance
is $d_1d_2$.
\end{proposition}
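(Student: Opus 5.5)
The plan is to identify the stabilizer space $\mathcal A$ of the concatenation with the cyclic code of support $S$, by exhibiting enough pure frequencies inside $\mathcal A_K$ and matching dimensions. Admissibility then reduces to a short sumset computation, and the distance follows from a direct block-by-block argument on $\mathcal A^\perp$.

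\emph{Step 1: pure frequencies restricted to blocks.} I would use the indexing $l=b+k_2a$ and set $\beta=\alpha^{k_2}$ and $\gamma=\alpha^{k_1}$. These are primitive $k_1$-th and $k_2$-th roots of unity, respectively. The restriction of $e^{(j)}$ to block $b$ is
\[
a\mapsto (k_1k_2)^{-1}\alpha^{-jb}\beta^{-ja}.
\]
This is a scalar multiple of the pure frequency $\pi(j)$ of $\mathbb Z_{k_1}$.
\begin{itemize}
\item If $\pi(j)\in S_1$, then every block restriction lies in $(C_1)_K$. Hence $e^{(j)}$ is a sum of vectors of $C_1$ placed on single blocks, and so $e^{(j)}\in\mathcal A_K$.
\item If $j=k_1s$, then $\beta^{-ja}=1$, so $e^{(j)}$ is constant on each block. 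Its value on block $b$ is proportional to $\gamma^{-sb}$, which is the pure frequency $s$ of $\mathbb Z_{k_2}$. If $s\in S_2$, this is $s_b\mathbf 1$ on block $b$ for a vector $s$ in $(C_2)_K$, so again $e^{(j)}\in\mathcal A_K$.
\end{itemize}
Thus $\operatorname{span}\{e^{(j)}:j\in S\}\subseteq\mathcal A_K$.

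\emph{Step 2: dimension count.} This is the step I expect to need the most care. The two pieces of $S$ are disjoint, because $\pi(k_1s)=0$ and $0\notin S_1$ by admissibility of $S_1$. Hence $|S|=k_2|S_1|+|S_2|$. On the other side, $\dim\mathcal A=k_2\dim C_1+\dim C_2$. The reason is as follows.
\begin{itemize}
\item The block copies of $C_1$ form a direct sum.
\item This direct sum meets the block-constant vectors only in $0$, since $\mathbf 1\notin C_1$ (because $0\notin S_1$ forces zero coordinate sums on $C_1$, while $\mathbf 1$ has sum $k_1\ne0$).
\item The map from $C_2$ to block-constant vectors is injective.
\end{itemize}
Since $|S_i|=\dim C_i$, the two dimensions agree. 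So $\mathcal A$ is cyclic with support $S$, and its logical row $\mathbf 1$ is the construction-A logical row. This shows that the construction-A code with support $S$ is the concatenation.

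\emph{Step 3: admissibility.} Take an element of $tS$ with $t\le3$, written as a sum of $t'$ elements of $\pi^{-1}(S_1)$ and $t-t'$ elements of $k_1\cdot S_2$.
\begin{itemize}
\item If $t'\ge1$, reducing modulo $k_1$ gives an element of $t'S_1$, which is nonzero because $0\notin t'S_1$.
\item If $t'=0$, the sum is $k_1u$ with $u\in tS_2$. Then $k_1u\equiv0\pmod{k_1k_2}$ only if $u\equiv0\pmod{k_2}$, which is excluded by $0\notin tS_2$.
\end{itemize}
Hence $0\notin S\cup2S\cup3S$.

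\emph{Step 4: distance.} By Proposition~\ref{prop:A}, the distance is the minimum weight of $v\in\mathcal A^\perp$ with nonzero coordinate sum.

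For the lower bound, take such a $v$ and let $v_b$ be its restriction to block $b$, with coordinate sum $\sigma_b$.
\begin{itemize}
\item Orthogonality to the block copies of $C_1$ gives $v_b\in C_1^\perp$.
\item Orthogonality to the block-constant stabilizers gives $\sigma=(\sigma_b)\in C_2^\perp$.
\item Since $\sum_b\sigma_b=\sum_lv_l\ne0$, the distance of $Q_2$ gives $\wt(\sigma)\ge d_2$.
\item Each block with $\sigma_b\ne0$ has $v_b\in C_1^\perp$ with nonzero sum, so $\wt(v_b)\ge d_1$.
\end{itemize}
Therefore $\wt(v)\ge d_1d_2$.

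For the upper bound, choose minimum-weight vectors $w\in C_1^\perp$ with $\sum w=1$ (after scaling) and $\sigma\in C_2^\perp$ with $\sum\sigma\ne0$. Place $\sigma_bw$ on block $b$. The resulting vector lies in $\mathcal A^\perp$, has coordinate sum $\sum_b\sigma_b\ne0$, and has weight $d_1d_2$. This gives $d=d_1d_2$.
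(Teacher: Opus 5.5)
Your proof is correct and follows essentially the same route as the paper's: identify $\mathcal A$ with the cyclic code of support $S$ by one inclusion plus a dimension count, prove admissibility by reducing modulo $k_1$ or dividing by $k_1$, and obtain $d=d_1d_2$ from the block-sum description of $\mathcal A^\perp$. The only difference is the direction of the inclusion. You prove $C_K\subseteq\mathcal A_K$ by restricting pure frequencies to blocks, which needs the one-line descent $\mathcal A_K\cap\F_p^n=\mathcal A$ (left implicit) to conclude $\mathcal A=C$. The paper instead proves $\mathcal A\subseteq C$ directly over $\F_p$ by computing the transforms of the spanning vectors of $\mathcal A$, and counts $\dim\mathcal A$ using disjoint spectra rather than your direct-sum argument with $\mathbf 1\notin C_1$.
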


\begin{proof}
Let $\alpha$ have order $k_1k_2$, and let $\beta=\alpha^{k_2}$ and $\gamma=\alpha^{k_1}$, of
orders $k_1$ and $k_2$. We take $S_1$ and $S_2$ with respect to $\beta$ and $\gamma$; another
choice of root changes them by a multiplier, which gives an equivalent code.

\emph{Step 1: two transforms.} If $v$ is supported on block $b$, with $v_{b+k_2a}=c_a$, then
\[
\widehat v(j)=\sum_a c_a\,\alpha^{j(b+k_2a)}=\alpha^{jb}\,\widehat c(j\bmod k_1),
\]
where $\widehat c$ is computed with $\beta$. If $v$ is constant on blocks, with
$v_l=s_{l\bmod k_2}$, then
\[
\widehat v(j)=\sum_b s_b\,\alpha^{jb}\sum_a\beta^{ja},
\]
which vanishes unless $k_1\mid j$, and $\widehat v(k_1j')=k_1\,\widehat s(j')$, where
$\widehat s$ is computed with $\gamma$.

\emph{Step 2: the stabilizer space is cyclic.} By Step~1, a vector of $C_1$ placed on one
block has spectrum in $\pi^{-1}(S_1)$, and a lifted vector of $C_2$ has spectrum in
$k_1\cdot S_2$. Hence $\mathcal A$ lies in the cyclic code $C$ with support $S$. The two
pieces of $S$ are disjoint, since $k_1\cdot S_2\subseteq\pi^{-1}(0)$ and $0\notin S_1$, so the
two spanning sets of $\mathcal A$ have disjoint spectra and
\[
\dim\mathcal A=k_2|S_1|+|S_2|=|\pi^{-1}(S_1)|+|k_1\cdot S_2|=|S|=\dim C.
\]
Thus $\mathcal A=C$. Both codes have logical row $\mathbf 1$, so by Lemma~\ref{lem:css} the
concatenation is the construction-A code with support $S$.

\emph{Step 3: admissibility.} Suppose a sum of one, two or three elements of $S$ vanishes. If
at least one of them lies in $\pi^{-1}(S_1)$, reduce modulo $k_1$: the elements of
$k_1\cdot S_2$ become $0$, leaving a vanishing sum of one to three elements of $S_1$. Otherwise
all of them lie in $k_1\cdot S_2$, and dividing by $k_1$ gives a vanishing sum of elements of
$S_2$ modulo $k_2$. Both cases are excluded by the admissibility of $S_1$ and $S_2$.

\emph{Step 4: the distance.} For $v\in\F_p^{k_1k_2}$, let $v^{(b)}\in\F_p^{k_1}$ be its
restriction to block $b$ and $\sigma_b=\sum_av^{(b)}_a$ its block sums. Orthogonality to the
stabilizers on block $b$ says $v^{(b)}\in C_1^\perp$, and orthogonality to a lifted vector
$s\in C_2$ says $\sum_bs_b\sigma_b=0$. Hence
\[
C^\perp=\{v:\ v^{(b)}\in C_1^\perp\text{ for all }b,\ \sigma\in C_2^\perp\},
\qquad\textstyle\sum_lv_l=\sum_b\sigma_b .
\]
By Proposition~\ref{prop:A}, $d$ is the least weight of $v\in C^\perp$ with
$\sum_lv_l\neq0$. For such $v$, $\sigma\in C_2^\perp$ has nonzero sum, so $\sigma_b\neq0$ for
at least $d_2$ blocks; on each of them $v^{(b)}\in C_1^\perp$ has nonzero sum and hence weight at
least $d_1$. So $d\ge d_1d_2$. Conversely, take $u\in C_1^\perp$ of weight $d_1$ with sum
$t\ne0$, and $\sigma\in C_2^\perp$ of weight $d_2$ with nonzero sum. Setting
$v^{(b)}=(\sigma_b/t)\,u$ gives $v\in C^\perp$ with block sums $\sigma$, nonzero total sum, and
weight $d_1d_2$.
\end{proof}

Concatenations involving construction B are not considered here, since construction-B codes
are not cyclic.

\subsection{Proof of Lemma~\ref{lem:mono}}\label{app:mono}
Here we prove Lemma~\ref{lem:mono} from the main text. We use the notation of Section~\ref{sec:perf}.

\begin{proof}
Let $V=\mathcal A^\perp$ and $V'=\mathcal A'^\perp\subseteq V$ be the sets of accepted
errors, and let $L$ be the set of vectors orthogonal to every logical row, so that the
accepted errors that act trivially on the output are $V\cap L$ and $V'\cap L$. For
$R\subseteq\{1,\dots,n\}$ and a subspace $Y$, let $Y_R$ be the subspace of vectors in $Y$
supported on $R$.

\emph{Step 1: resampling the noise.} Let $\eta=p\epsilon/(p-1)\le1$. The noise is
equivalent to the following: choose a random set $R$ containing each coordinate
independently with probability $\eta$, and apply $Z(j)$ with $j$ uniform among vectors
supported on $R$. Each coordinate of $j$ is then nonzero with probability
$\eta(p-1)/p=\epsilon$, with all nonzero values equally likely.

\emph{Step 2: reduction to a correlation inequality.} Given $R$, the smaller code accepts
with probability $|V_R|/p^{|R|}$, and conditioned on acceptance $j$ is uniform on $V_R$.
Hence, conditioned on acceptance, $R$ has distribution
\[
\mu(R)\propto\eta^{|R|}(1-\eta)^{n-|R|}\,p^{\dim V_R-|R|},
\]
and
\[
P(j\in L\mid j\in V)=\mathbb E_\mu[f],\quad
P(j\in V'\mid j\in V)=\mathbb E_\mu[g],\quad
P(j\in V'\cap L\mid j\in V)=\mathbb E_\mu[h],
\]
where $f(R)=|V_R\cap L|/|V_R|$, $g(R)=|V_R\cap V'|/|V_R|$ and
$h(R)=|V_R\cap V'\cap L|/|V_R|$. Since $1-\epsilon_{\rm out}=\mathbb E_\mu[f]$ and
$1-\epsilon'_{\rm out}=P(j\in L\mid j\in V')=\mathbb E_\mu[h]/\mathbb E_\mu[g]$, the lemma
is equivalent to
\begin{equation}\label{eq:corr}
\mathbb E_\mu[h]\ge\mathbb E_\mu[f]\,\mathbb E_\mu[g].
\end{equation}

\emph{Step 3: $h\ge fg$ pointwise.} For subspaces $A,B$ of $V_R$ we have
$|A|\,|B|=|A\cap B|\,|A+B|\le|A\cap B|\,|V_R|$. Taking $A=V_R\cap L$ and $B=V_R\cap V'$
gives $h(R)\ge f(R)g(R)$.

\emph{Step 4: $f$ and $g$ are decreasing.} $1/f(R)=|V_R/(V_R\cap L)|$ is the size of the
image of $V_R$ in $V/(V\cap L)$, which can only grow as $R$ grows. The same argument
applies to $g$, with $V'$ in place of $L$.

\emph{Step 5: the FKG inequality.} The measure $\mu$ satisfies the lattice condition
$\mu(R\cup R')\,\mu(R\cap R')\ge\mu(R)\,\mu(R')$. The factors
$\eta^{|R|}(1-\eta)^{n-|R|}$ and $p^{-|R|}$ are modular in $R$, and since
$V_{R\cap R'}=V_R\cap V_{R'}$ and $V_{R\cup R'}\supseteq V_R+V_{R'}$,
\[
\dim V_{R\cup R'}+\dim V_{R\cap R'}\ge\dim V_R+\dim V_{R'}.
\]
By the FKG inequality~\cite{FKG1971}, the decreasing functions $f$ and $g$ are positively
correlated under $\mu$, so $\mathbb E_\mu[fg]\ge\mathbb E_\mu[f]\,\mathbb E_\mu[g]$.
Together with Step 3 this gives~\eqref{eq:corr}. (For $\eta=1$, $\mu$ is a point mass and
\eqref{eq:corr} follows from Step 3 alone.)
\end{proof}

% Generated by scripts/make_tables.py from results/; do not edit by hand.
\section{Best code at each length}\label{app:tables}
For each construction and length, we list the code with the largest distance among all maximal supports, followed in parentheses by the number of maximal supports up to multiplier equivalence. Lengths at which every code has distance $1$ are omitted. An entry is exact only if the distance of every maximal support at that length is known exactly; otherwise $\ge d$ gives the best rigorous lower bound, and $a\text{--}b$ means that the best distance lies between $a$ and $b$. Lower bounds also use Proposition~\ref{prop:concat}: a support containing the concatenation of two construction-A supports, of distances $d_1$ and $d_2$, has distance at least $d_1d_2$, with equality if it equals that support.

\paragraph{Qubits.} Both constructions use the same supports, all of them triply even. Each entry gives the construction-A code $[[n,1,d]]$ and, after ``B:'', the distance of the construction-B code of length $n-1$.
\begin{center}\small
\begin{tabular}{lll}
$[[15,1,3]]$, B: $2$ (1) & $[[31,1,3]]$, B: $2$ (1) & $[[45,1,3]]$, B: $2$ (2) \\
$[[51,1,3]]$, B: $2$ (1) & $[[63,1,3]]$, B: $2$ (1) & $[[75,1,3]]$, B: $2$ (2) \\
$[[85,1,5]]$, B: $4$ (1) & $[[89,1,5]]$, B: $3$ (1) & $[[93,1,5]]$, B: $4$ (3) \\
$[[105,1,3]]$, B: $2$ (2) & $[[117,1,3]]$, B: $2$ (1) & $[[123,1,3]]$, B: $2$ (1) \\
$[[127,1,7]]$, B: $6$ (3) & $[[135,1,3]]$, B: $2$ (3) & $[[153,1,3]]$, B: $2$ (2) \\
$[[155,1,5]]$, B: $4$ (3) & $[[165,1,3]]$, B: $2$ (2) & $[[189,1,5]]$, B: $4$ (5) \\
$[[195,1,3]]$, B: $2$ (3) & $[[215,1,5]]$, B: $4$ (1) & $[[217,1,5]]$, B: $4$ (6) \\
$[[221,1,5]]$, B: $4$ (1) & $[[223,1,9]]$, B: $8$ (1) & $[[225,1,9]]$, B: $3$ (4) \\
$[[233,1,7]]$, B: $6$ (1) & $[[255,1,7]]$, B: $6$ (46) & $[[267,1,7]]$, B: $6$ (3) \\
$[[275,1,5]]$, B: $4$ (1) & $[[279,1,7]]$, B: $5$ (6) & $[[285,1,3]]$, B: $2$ (2) \\
$[[315,1,7]]$, B: $6$ (29) & $[[339,1,3]]$, B: $2$ (1) & $[[341,1,\ge9]]$, B: $\ge7$ (31) \\
$[[345,1,3]]$, B: $2$ (2) & $[[351,1,7]]$, B: $6$ (3) & $[[357,1,3]]$, B: $2$ (4) \\
$[[369,1,3]]$, B: $2$ (2) & $[[375,1,3]]$, B: $2$ (3) & $[[381,1,\ge7]]$, B: $6\text{--}7$ (41) \\
$[[399,1,3]]$, B: $2$ (2) & $[[403,1,5]]$, B: $4$ (3) & $[[405,1,3]]$, B: $2$ (4) \\
$[[425,1,5]]$, B: $4$ (6) & $[[431,1,\ge7]]$, B: $\ge6$ (1) & $[[435,1,3]]$, B: $2$ (2) \\
$[[441,1,3]]$, B: $2$ (7) & $[[445,1,7]]$, B: $6$ (3) & $[[451,1,\ge7]]$, B: $6\text{--}7$ (2) \\
$[[455,1,5]]$, B: $4$ (5) & $[[459,1,3]]$, B: $2$ (3) & $[[465,1,\ge9]]$, B: $6\text{--}7$ (23) \\
$[[471,1,3]]$, B: $2$ (1) & $[[485,1,5]]$, B: $4$ (1) & $[[495,1,5]]$, B: $3$ (6) \\
$[[511,1,\ge7]]$, B: $\ge6$ (158) & $[[525,1,3]]$, B: $2$ (8) & $[[527,1,5]]$, B: $4$ (3) \\
$[[549,1,3]]$, B: $2$ (1) & $[[555,1,3]]$, B: $2$ (2) & $[[561,1,3]]$, B: $2$ (4) \\
$[[567,1,5]]$, B: $4$ (12) & $[[585,1,\ge7]]$, B: $\ge6$ (189) & $[[589,1,5]]$, B: $4$ (3) \\
$[[595,1,5]]$, B: $4$ (12) & $[[601,1,\ge7]]$, B: $\ge6$ (2) & $[[615,1,3]]$, B: $2$ (9) \\
$[[623,1,\ge7]]$, B: $6$ (8) & $[[635,1,\ge7]]$, B: $\ge6$ (41) & $[[645,1,5]]$, B: $4$ (5) \\
$[[651,1,\ge7]]$, B: $\ge6$ (64) & $[[657,1,3]]$, B: $2$ (12) & $[[663,1,\ge7]]$, B: $\ge6$ (9) \\
$[[669,1,\ge7]]$, B: $\ge6$ (1) & $[[671,1,\ge7]]$, B: $\ge6$ (1) & $[[675,1,\ge9]]$, B: $3$ (12) \\
$[[693,1,3]]$, B: $2$ (9) & $[[697,1,5]]$, B: $4$ (1) & $[[699,1,7]]$, B: $5\text{--}6$ (1) \\
$[[705,1,3]]$, B: $2$ (2) & $[[713,1,5]]$, B: $4$ (6) & $[[731,1,5]]$, B: $4$ (1) \\
$[[735,1,3]]$, B: $2$ (6) & $[[741,1,3]]$, B: $2$ (1) & $[[763,1,3]]$, B: $2$ (1) \\
\end{tabular}\end{center}

\paragraph{Qutrits, construction A.} 92 maximal supports in total.
\begin{center}\small
\begin{tabular}{llll}
$[[8,1,2]]$ (1) & $[[16,1,2]]$ (2) & $[[20,1,2]]$ (1) & $[[22,1,2]]$ (1) \\
$[[26,1,2]]$ (1) & $[[32,1,2]]$ (3) & $[[40,1,4]]$ (4) & $[[44,1,2]]$ (2) \\
$[[46,1,2]]$ (1) & $[[52,1,4]]$ (2) & $[[56,1,2]]$ (3) & $[[64,1,4]]$ (4) \\
$[[68,1,2]]$ (1) & $[[70,1,2]]$ (1) & $[[80,1,7]]$ (18) & $[[88,1,4]]$ (5) \\
$[[92,1,2]]$ (2) & $[[94,1,2]]$ (1) & $[[100,1,2]]$ (2) & $[[104,1,5]]$ (16) \\
$[[110,1,2]]$ (2) & $[[112,1,4]]$ (11) & $[[116,1,2]]$ (1) & $[[118,1,2]]$ (1) \\
$[[121,1,8]]$ (6) &  &  &  \\
\end{tabular}\end{center}

\paragraph{Qutrits, construction B.} 195 maximal supports in total.
\begin{center}\small
\begin{tabular}{llll}
$[[15,1,2]]$ (4) & $[[19,1,3]]$ (3) & $[[27,1,3]]$ (4) & $[[31,1,2]]$ (6) \\
$[[39,1,3]]$ (9) & $[[40,1,4]]$ (1) & $[[43,1,3]]$ (4) & $[[51,1,3]]$ (6) \\
$[[55,1,3]]$ (8) & $[[60,1,4]]$ (1) & $[[63,1,2]]$ (9) & $[[67,1,3]]$ (3) \\
$[[75,1,3]]$ (4) & $[[79,1,7]]$ (33) & $[[81,1,7]]$ (3) & $[[87,1,3]]$ (10) \\
$[[91,1,3]]$ (4) & $[[99,1,3]]$ (4) & $[[103,1,5]]$ (25) & $[[111,1,4]]$ (15) \\
$[[115,1,3]]$ (3) & $[[120,1,7]]$ (13) &  &  \\
\end{tabular}\end{center}

\paragraph{Ququints, construction A.} 248 maximal supports in total.
\begin{center}\small
\begin{tabular}{llll}
$[[4,1,2]]$ (1) & $[[8,1,2]]$ (2) & $[[12,1,2]]$ (2) & $[[16,1,4]]$ (3) \\
$[[22,1,2]]$ (1) & $[[24,1,4]]$ (7) & $[[28,1,2]]$ (2) & $[[32,1,4]]$ (6) \\
$[[36,1,2]]$ (4) & $[[38,1,2]]$ (1) & $[[39,1,3]]$ (1) & $[[44,1,3]]$ (4) \\
$[[48,1,5]]$ (23) & $[[52,1,2]]$ (4) & $[[56,1,3]]$ (8) & $[[62,1,2]]$ (4) \\
$[[64,1,8]]$ (10) & $[[66,1,2]]$ (2) & $[[68,1,2]]$ (2) & $[[71,1,6]]$ (3) \\
$[[72,1,4]]$ (30) & $[[76,1,3]]$ (4) & $[[78,1,3]]$ (5) & $[[84,1,2]]$ (12) \\
$[[88,1,4]]$ (15) & $[[91,1,4]]$ (1) & $[[92,1,2]]$ (2) & $[[96,1,8]]$ (89) \\
\end{tabular}\end{center}

\paragraph{Ququints, construction B.} 395 maximal supports in total.
\begin{center}\small
\begin{tabular}{llll}
$[[7,1,2]]$ (3) & $[[11,1,3]]$ (3) & $[[12,1,3]]$ (1) & $[[15,1,2]]$ (5) \\
$[[23,1,3]]$ (12) & $[[25,1,3]]$ (3) & $[[27,1,3]]$ (3) & $[[31,1,2]]$ (8) \\
$[[35,1,3]]$ (4) & $[[38,1,3]]$ (3) & $[[43,1,3]]$ (5) & $[[47,1,4]]$ (40) \\
$[[51,1,3]]$ (9) & $[[55,1,3]]$ (9) & $[[62,1,2]]$ (1) & $[[63,1,2]]$ (14) \\
$[[67,1,3]]$ (3) & $[[70,1,5]]$ (4) & $[[71,1,3]]$ (43) & $[[75,1,3]]$ (5) \\
$[[77,1,5]]$ (36) & $[[83,1,3]]$ (9) & $[[86,1,2]]$ (1) & $[[87,1,3]]$ (14) \\
$[[90,1,6]]$ (2) & $[[91,1,3]]$ (3) & $[[95,1,4]]$ (134) &  \\
\end{tabular}\end{center}

\paragraph{$p=7$, construction A.} 624 maximal supports in total.
\begin{center}\small
\begin{tabular}{llll}
$[[6,1,2]]$ (1) & $[[12,1,3]]$ (1) & $[[16,1,2]]$ (1) & $[[18,1,2]]$ (2) \\
$[[20,1,2]]$ (1) & $[[24,1,3]]$ (4) & $[[29,1,6]]$ (1) & $[[30,1,2]]$ (2) \\
$[[32,1,3]]$ (4) & $[[36,1,4]]$ (6) & $[[38,1,2]]$ (2) & $[[40,1,4]]$ (3) \\
$[[48,1,6]]$ (46) & $[[52,1,2]]$ (1) & $[[54,1,2]]$ (4) & $[[58,1,6]]$ (2) \\
$[[60,1,4]]$ (9) & $[[62,1,2]]$ (1) & $[[64,1,5]]$ (10) & $[[66,1,2]]$ (2) \\
$[[68,1,2]]$ (1) & $[[72,1,7]]$ (43) & $[[74,1,2]]$ (1) & $[[75,1,3]]$ (2) \\
$[[76,1,4]]$ (6) & $[[78,1,2]]$ (2) & $[[80,1,\ge7]]$ (28) & $[[87,1,\ge6]]$ (5) \\
$[[90,1,2]]$ (8) & $[[94,1,2]]$ (1) & $[[96,1,\ge6]]$ (424) &  \\
\end{tabular}\end{center}

\paragraph{$p=7$, construction B.} 1600 maximal supports in total.
\begin{center}\small
\begin{tabular}{llll}
$[[7,1,3]]$ (3) & $[[11,1,3]]$ (4) & $[[15,1,3]]$ (5) & $[[19,1,3]]$ (3) \\
$[[23,1,3]]$ (20) & $[[24,1,3]]$ (1) & $[[28,1,5]]$ (1) & $[[31,1,4]]$ (10) \\
$[[35,1,3]]$ (9) & $[[39,1,3]]$ (9) & $[[43,1,3]]$ (4) & $[[47,1,5]]$ (119) \\
$[[49,1,7]]$ (4) & $[[51,1,3]]$ (3) & $[[57,1,5]]$ (2) & $[[59,1,3]]$ (16) \\
$[[63,1,4]]$ (18) & $[[64,1,4]]$ (1) & $[[67,1,3]]$ (3) & $[[71,1,5]]$ (165) \\
$[[74,1,4]]$ (8) & $[[75,1,3]]$ (9) & $[[79,1,\ge7]]$ (51) & $[[86,1,5]]$ (6) \\
$[[87,1,5]]$ (10) & $[[91,1,3]]$ (4) & $[[95,1,\ge6]]$ (1092) &  \\
\end{tabular}\end{center}

\section{Spectral supports of selected codes}\label{app:supports}

Supports are given with respect to the primitive root $\alpha$ used in the data files: GAP's
$Z(p^m)^{(p^m-1)/n}$, or for $n=64$, $p=5$ a root of the first factor of $\Phi_{64}$ returned by
PARI/GP. Another choice of $\alpha$ changes $S$ by a multiplier, which gives an equivalent code. 

\begin{center}
\small
\begin{tabular}{@{}ll>{\raggedright\arraybackslash}p{7.2cm}@{}}
\toprule
code & constr., length & spectral support $S$\\
\midrule
$[[15,1,3]]_2$ & A, $15$ & $\{1,\allowbreak 2,\allowbreak 4,\allowbreak 8\}$\\
$[[85,1,5]]_2$ & A, $85$ & $\{1,\allowbreak 2,\allowbreak 3,\allowbreak 4,\allowbreak 6,\allowbreak 8,\allowbreak 11,\allowbreak 12,\allowbreak 16,\allowbreak 22,\allowbreak 24,\allowbreak 32,\allowbreak 43,\allowbreak 44,\allowbreak 48,\allowbreak 64\}$\\
$[[127,1,7]]_2$ & A, $127$ & $\{1,\allowbreak 2,\allowbreak 3,\allowbreak 4,\allowbreak 5,\allowbreak 6,\allowbreak 8,\allowbreak 9,\allowbreak 10,\allowbreak 12,\allowbreak 16,\allowbreak 17,\allowbreak 18,\allowbreak 20,\allowbreak 24,\allowbreak 32,\allowbreak 33,\allowbreak 34,\allowbreak 36,\allowbreak 40,\allowbreak 48,\allowbreak 64,\allowbreak 65,\allowbreak 66,\allowbreak 68,\allowbreak 72,\allowbreak 80,\allowbreak 96\}$\\
$[[223,1,9]]_2$ & A, $223$ & $\{1,\allowbreak 2,\allowbreak 4,\allowbreak 7,\allowbreak 8,\allowbreak 14,\allowbreak 15,\allowbreak 16,\allowbreak 17,\allowbreak 28,\allowbreak 30,\allowbreak 32,\allowbreak 33,\allowbreak 34,\allowbreak 41,\allowbreak 49,\allowbreak 56,\allowbreak 60,\allowbreak 64,\allowbreak 66,\allowbreak 68,\allowbreak 82,\allowbreak 98,\allowbreak 105,\allowbreak 112,\allowbreak 115,\allowbreak 119,\allowbreak 120,\allowbreak 128,\allowbreak 132,\allowbreak 136,\allowbreak 164,\allowbreak 169,\allowbreak 171,\allowbreak 196,\allowbreak 197,\allowbreak 210\}$\\
\midrule
$[[8,1,2]]_3$ & A, $8$ & $\{1,\allowbreak 3\}$\\
$[[15,1,2]]_3$ & B, $16$ & $\{1,\allowbreak 3,\allowbreak 8,\allowbreak 9,\allowbreak 11\}$\\
$[[19,1,3]]_3$ & B, $20$ & $\{2,\allowbreak 5,\allowbreak 6,\allowbreak 14,\allowbreak 15,\allowbreak 18\}$\\
$[[40,1,4]]_3$ & A, $40$ & $\{1,\allowbreak 2,\allowbreak 3,\allowbreak 5,\allowbreak 6,\allowbreak 9,\allowbreak 14,\allowbreak 15,\allowbreak 18,\allowbreak 27\}$\\
$[[40,1,4]]_3$ & B, $41$ & $\{1,\allowbreak 3,\allowbreak 9,\allowbreak 14,\allowbreak 27,\allowbreak 32,\allowbreak 38,\allowbreak 40\}$\\
$[[80,1,7]]_3$ & A, $80$ & $\{1,\allowbreak 2,\allowbreak 3,\allowbreak 4,\allowbreak 5,\allowbreak 6,\allowbreak 9,\allowbreak 12,\allowbreak 15,\allowbreak 18,\allowbreak 27,\allowbreak 28,\allowbreak 36,\allowbreak 45,\allowbreak 54,\allowbreak 55\}$\\
$[[79,1,7]]_3$ & B, $80$ & $\{1,\allowbreak 3,\allowbreak 9,\allowbreak 14,\allowbreak 22,\allowbreak 27,\allowbreak 34,\allowbreak 38,\allowbreak 40,\allowbreak 42,\allowbreak 46,\allowbreak 53,\allowbreak 58,\allowbreak 66,\allowbreak 71,\allowbreak 77,\allowbreak 79\}$\\
$[[121,1,8]]_3$ & A, $121$ & $\{1,\allowbreak 2,\allowbreak 3,\allowbreak 6,\allowbreak 7,\allowbreak 9,\allowbreak 16,\allowbreak 18,\allowbreak 21,\allowbreak 23,\allowbreak 27,\allowbreak 41,\allowbreak 48,\allowbreak 54,\allowbreak 63,\allowbreak 68,\allowbreak 69,\allowbreak 81,\allowbreak 83,\allowbreak 86\}$\\
\midrule
$[[4,1,2]]_5$ & A, $4$ & $\{1\}$\\
$[[11,1,3]]_5$ & B, $12$ & $\{2,\allowbreak 3,\allowbreak 9,\allowbreak 10\}$\\
$[[16,1,4]]_5$ & A, $16$ & $\{1,\allowbreak 4,\allowbreak 5,\allowbreak 9,\allowbreak 13\}$\\
$[[47,1,4]]_5$ & B, $48$ & $\{1,\allowbreak 2,\allowbreak 3,\allowbreak 5,\allowbreak 10,\allowbreak 12,\allowbreak 15,\allowbreak 25,\allowbreak 27,\allowbreak 29,\allowbreak 39\}$\\
$[[48,1,5]]_5$ & A, $48$ & $\{1,\allowbreak 2,\allowbreak 3,\allowbreak 5,\allowbreak 10,\allowbreak 12,\allowbreak 15,\allowbreak 25,\allowbreak 27,\allowbreak 29,\allowbreak 39\}$\\
$[[70,1,5]]_5$ & B, $71$ & $\{1,\allowbreak 5,\allowbreak 14,\allowbreak 17,\allowbreak 25,\allowbreak 46,\allowbreak 54,\allowbreak 57,\allowbreak 66,\allowbreak 70\}$\\
$[[90,1,6]]_5$ & B, $91$ & $\{1,\allowbreak 5,\allowbreak 25,\allowbreak 28,\allowbreak 31,\allowbreak 34,\allowbreak 42,\allowbreak 47,\allowbreak 49,\allowbreak 51,\allowbreak 53,\allowbreak 63,\allowbreak 64,\allowbreak 73,\allowbreak 79,\allowbreak 83\}$\\
$[[64,1,8]]_5$ & A, $64$ & $\{1,\allowbreak 4,\allowbreak 5,\allowbreak 9,\allowbreak 13,\allowbreak 16,\allowbreak 17,\allowbreak 20,\allowbreak 21,\allowbreak 25,\allowbreak 29,\allowbreak 33,\allowbreak 36,\allowbreak 37,\allowbreak 41,\allowbreak 45,\allowbreak 49,\allowbreak 52,\allowbreak 53,\allowbreak 57,\allowbreak 61\}$\\
\midrule
$[[6,1,2]]_7$ & A, $6$ & $\{1\}$\\
$[[7,1,3]]_7$ & B, $8$ & $\{1,\allowbreak 4,\allowbreak 7\}$\\
$[[12,1,3]]_7$ & A, $12$ & $\{1,\allowbreak 2,\allowbreak 7\}$\\
$[[28,1,5]]_7$ & B, $29$ & $\{1,\allowbreak 7,\allowbreak 16,\allowbreak 20,\allowbreak 23,\allowbreak 24,\allowbreak 25\}$\\
$[[29,1,6]]_7$ & A, $29$ & $\{1,\allowbreak 7,\allowbreak 16,\allowbreak 20,\allowbreak 23,\allowbreak 24,\allowbreak 25\}$\\
$[[49,1,7]]_7$ & B, $50$ & $\{1,\allowbreak 3,\allowbreak 7,\allowbreak 12,\allowbreak 16,\allowbreak 21,\allowbreak 25,\allowbreak 29,\allowbreak 34,\allowbreak 38,\allowbreak 43,\allowbreak 47,\allowbreak 49\}$\\
\bottomrule
\end{tabular}
\end{center}

\end{document}